\documentclass[a4paper,11pt]{article}

\usepackage[utf8]{inputenc}
\usepackage[T1]{fontenc}
\usepackage{amsmath,amssymb,amsthm}
\usepackage{booktabs}
\usepackage{placeins}
\usepackage[margin=2.5cm]{geometry}
\usepackage{graphicx}
\usepackage{natbib}
\usepackage[hidelinks]{hyperref}

\newtheorem{theorem}{Theorem}
\newtheorem{corollary}[theorem]{Corollary}
\newtheorem{lemma}[theorem]{Lemma}

\title{Where the Quantum Lives in D-Wave Hybrid Portfolio Optimization:\\
An Operational Decomposition Audit}

\author{
Luis Lozano\footnote{e-mail address: {\tt lalozanom@tec.mx}; ORCID:
\href{https://orcid.org/0000-0001-7202-3437}{0000-0001-7202-3437}}
\\[6pt]
{\small \em EGADE Business School, Tecnol\'ogico de Monterrey,}\\
{\small\em Santa Fe, Mexico City, Mexico}
}

\date{}

\begin{document}

\maketitle

\begin{abstract}
Hybrid quantum-classical solvers conceal how reported performance divides
between quantum-processing-unit (QPU) access and other service time.  We
audit D-Wave's Leap service on cardinality-constrained mean-variance
portfolio instances from $N{=}10$ to $640$, comparing constraint-native
constrained quadratic models (CQMs), penalty-encoded binary quadratic models
(BQMs), Gurobi mixed-integer quadratic programming, simulated annealing, and
a matched-budget Tabu baseline.  We report the exposed timing fields and
propose a four-metric operational audit.

LeapHybridCQM matches Gurobi's proven optimum on all $54$ head-to-head
instances at $N\leq120$.  Its mean QPU access is $0.034$ seconds, $0.68\%$
of the nominal five-second budget and approximately $0.72\%$ of measured run
time; telemetry does not resolve the residual into classical computation,
orchestration, or budget granularity.  Across $108$ fixed-seed, five-second
CPU runs on $36$ instances, \texttt{TabuSampler} reaches CQM-level objectives
with mean absolute delta $0.00078$ (maximum $0.00798$); at
$N\in\{400,640\}$, mean absolute delta is $0.0000304$.  This external
comparison establishes classical objective achievability, not an internal
QPU ablation.

We prove that quadratic encoding of exact cardinality adds a dense rank-one
term and, apart from isolated exact coefficient cancellations, yields a
complete logical graph independent of covariance support, with a bounded-degree
embedding lower bound.  The resulting density collapse
is consistent with hybrid-BQM degradation.  A descriptive five-window
Fama--French overlay of portfolios produced by the post-projection
direct-QPU path yields mean
Sharpe $1.94$ versus $2.22$ for $1/N$.

These results do not establish a quantum-sampling advantage.  They motivate
reporting QPU wall-clock fraction, optimality gap, graph-density amplification,
service-output objective variance, and matched-budget classical controls,
while keeping operational accounting separate from causal attribution.
\end{abstract}

\vspace{4pt}
\noindent\textbf{Keywords:} quantum-classical hybrid solver; quantum
annealing; benchmarking methodology; D-Wave; operational audit; portfolio
optimization; cardinality constraint.

% ============================================================================
\section{Introduction}
\label{sec:intro}
% ============================================================================

Quantum annealing has attracted sustained interest as a potential accelerator for combinatorial
portfolio optimization~\citep{orus2019,herman2023}. The standard approach encodes a
cardinality-constrained mean-variance objective as a quadratic unconstrained binary optimization
(QUBO) problem, adds a quadratic penalty term to enforce the cardinality constraint, embeds the
resulting logical graph onto the hardware topology via minor embedding, and samples on the
quantum processing unit (QPU)~\citep{venturelli2019,mugel2022,sakuler2025}. A parallel line of
work uses D-Wave's hybrid cloud solvers, which accept either penalty-encoded binary quadratic
models (BQMs) or constraint-native constrained quadratic models (CQMs) and allocate classical
and QPU resources internally~\citep{mugel2022}.

Despite this body of work, the relative performance of these formulation choices has not been
systematically characterized across the joint space of problem size, covariance density, and
solver budget. Most studies benchmark a single solver at a single scale. We address this gap by
defining a canonical cardinality-constrained mean-variance (MV) portfolio problem and comparing four solver
paths (direct QPU, hybrid BQM, hybrid CQM, and classical baselines: Gurobi MIQP and simulated
annealing) across controlled synthetic instances and real equity data from the Fama--French 49
industry portfolios~\citep{french_data}.  The formulation permits an optional
turnover term for re-optimisation from a fixed previous portfolio, but every
reported campaign sets that term to zero.  The empirical testbed is therefore
the single-period MV special case; this choice isolates cardinality encoding
and solver-interface effects, and the results do not test non-zero switching
costs.  The intended users of these results are D-Wave
practitioners choosing between solver interfaces, fintech engineering teams evaluating hybrid
solvers for constrained allocation problems, and quantitative researchers benchmarking
quantum-assisted portfolio optimization.

\paragraph{Contributions.}
This paper makes three principal contributions to the literature on hybrid
quantum-classical portfolio optimization.  The first is empirical: on the
$162$ synthetic-hybrid runs of the $N{=}10$--$640$ campaign, the
constraint-native LeapHybridCQM service matches Gurobi's proven optimum on
every head-to-head instance where Gurobi proves optimality
($54$ instances at $N \leq 120$; a paired one-sided Wilcoxon signed-rank test
rejects $H_0:f_{\mathrm{CQM}}\geq f_{\mathrm{BQM}}$ at every $N \geq 20$,
Holm-adjusted $p < 0.05$), and we report all three SDK timing fields rather than the
single $t_{\mathrm{QPU}}$ measurement that has previously been treated as
the proxy for ``quantum work.''  Mean $t_{\mathrm{QPU}}$ is $0.034$
seconds out of the $5$-second hybrid CQM minimum budget, a $0.68\%$
QPU wall-clock fraction, and the saved records contain exactly
identical objective values across all available repeated calls at every tested
$(N, \rho, B)$ cell.\footnote{$10$ repetitions completed in every cell
except $(N{=}50, \text{dense}, 300\,\text{s})$, where four completed
within the campaign wall-clock budget; the objective-stability observation is
verified at the cell level against the saved JSONL records.}

The second contribution is a matched-budget classical-reproducibility
comparison.  Because the protocol's
$r_{\mathrm{QPU}}$ metric (introduced in the third contribution below)
bounds wall-clock share rather than causal contribution, we complement
the telemetry with a CPU-only matched-wall-clock comparison:
\texttt{TabuSampler} 1.7.0 at $5$\,s on the same
penalty-encoded BQM that hybrid BQM consumes, on $36$ instances at
$N \in \{50, 80, 120, 200, 400, 640\}$ with three fixed algorithm seeds
per instance ($108$ runs).  Across the full grid, the mean absolute
Tabu--CQM delta is $0.00078$ (maximum $0.00798$); at
$N \in \{400, 640\}$, the scales with the largest CQM-over-BQM
margins and no optimality anchor, it is $0.000030$ (maximum
$0.000989$).
Because the penalty-encoded BQM and the constraint-native CQM encode
the same cardinality-constrained problem (same feasible set, same
optimum), matching the CQM objective on the BQM matches the CQM path's
solution quality; equal objective value does not by itself imply the
identical selected portfolio unless the optimizer is unique, which we
do not claim.  Combined with the $0.68\%$ wall-clock fraction, this
shows that, at matched compute, a classical heuristic alone reaches
the same objective levels at every tested scale; this is an
objective-achievability result, not an internal ablation of the
hybrid pipeline.

The third contribution is methodological.  We formalise the empirical
density-axis collapse as a theorem: for a cardinality constraint
$\mathbf{1}^\top z = K$ enforced by quadratic penalty $A \cdot
(\mathbf{1}^\top z - K)^2$, the encoded QUBO matrix has off-diagonal
support equal to the complete graph $K_N$ regardless of the input
covariance support, except at the finite set of exact coefficient-cancellation
values identified in Theorem~\ref{thm:density_collapse}; none occurs in the
tested encodings.  A corollary gives an $\Omega(N^2/d)$ physical-qubit
lower bound on bounded-degree hardware, and a conditional lemma links an
individual chain's length to its break probability.  Building on these
structural results, we
propose an operational decomposition audit protocol consisting of four
metrics (the QPU wall-clock fraction $r_{\mathrm{QPU}}$, the optimality
gap $g_F$, the encoded-graph density amplification $d_{\mathrm{amp}}$, and
the service variance $v_{\mathrm{service}}$) and a seven-step reporting
procedure for hybrid benchmarks.  A reference implementation accompanies
the paper.  An out-of-sample financial layer on Fama--French
$49$-industry portfolios further shows that portfolios retained after the
direct-QPU path's exact-$K$ projection marginally underperform a naive $1/N$
benchmark in Sharpe terms ($-0.28$ on average across five rolling
windows), with the largest positive QPU-minus-baseline spread occurring
in the earliest sampled window.

Together, these results quantify what a reported D-Wave hybrid win on
this problem class can claim: on the exposed telemetry it presents as a
constraint-native, classically orchestrated pipeline, consistent with
the service's documented design~\citep{dwave_hybrid_solvers_whitepaper},
with a small measured QPU-access component, and the operational
benchmarking standard should report all three timing fields and explicit
matched-compute classical baselines rather than $t_{\mathrm{QPU}}$
alone.  Throughout, ``where the quantum lives'' refers to the
operational, wall-clock location of QPU activity in the service's
telemetry; we make no claim to localize a causal quantum contribution,
which the exposed telemetry cannot identify.

In benchmarking terms, this is a system audit rather
than a model-independent application benchmark.  Application-oriented
benchmark suites emphasize end-to-end quality and time-to-solution
reporting~\citep{lubinski2023}; robust optimization benchmarks additionally
require formulation-appropriate classical and quantum baselines and equal
hyperparameter effort~\citep{bucher2024}.  The broader standardization
framework of~\citet{acuaviva2026} identifies relevance, reproducibility,
fairness, verifiability, and usability as benchmark quality attributes.
Most recently, \citet{koch2026qoblib} distinguish application, algorithm,
and system benchmarking and note that only model-independent application
benchmarks can support a quantum-advantage claim.  We adopt that taxonomy:
the present study cannot establish quantum advantage.  Its narrower
contribution is to operationalize resource accounting for an opaque hybrid
service by combining exposed timing fields, formulation-level graph
amplification, output variance, and matched-budget classical reproducibility.

This study sits next to several portfolio-optimization benchmarks using quantum and hybrid methods.
\citet{mugel2022} study dynamic portfolio optimization on real data with quantum and
quantum-inspired methods. \citet{sakuler2025} compare QBSolv, Hybrid BQM, Hybrid CQM, and exact
classical solutions for a production portfolio-optimization use case, finding that CQM is useful for
handling constraints without manual penalty tuning. \citet{lang2022} and \citet{buonaiuto2023}
benchmark annealing and gate-model workflows for portfolio optimization, while \citet{acharya2025}
develop a decomposition pipeline for large constrained portfolio and rebalancing problems.
\citet{morapakula2025endtoend} build an end-to-end deployable pipeline that combines CQM-based
discrete asset selection with classical weight allocation and rebalancing, demonstrating the
operational viability of hybrid annealing in a real market context.  Our contribution is
narrower and more diagnostic: we isolate the formulation choice for a cardinality-constrained mean-variance
problem, sweep size, covariance density, and solver budget, and connect the hybrid BQM/CQM
comparison to the submitted graph's density amplification and direct-QPU
embedding overhead without treating either as an internal hybrid-pipeline
ablation.

We note explicitly that this is a formulation-and-platform study within the D-Wave ecosystem,
not a broad quantum-vs-classical superiority claim. Classical solvers (Gurobi MIQP) find
provably optimal solutions on $62$ of $63$ attempted instances at $N \leq 200$
(including all $54$ head-to-head instances at $N \leq 120$; one $N{=}200$
run reaches the time limit), while the academic license does not support the
model size at $N \geq 400$; the practical question addressed here is
which D-Wave solver interface to use when a practitioner has chosen to use D-Wave hardware.

My companion paper~\citep{lozano2026penalty} addresses
the adjacent practitioner question of how to make direct-QPU portfolio
optimisation viable by replacing the penalty-encoded QUBO with an
objective-only QUBO and a classical feasibility projector.  The
density-collapse root diagnosis (Theorem~\ref{thm:density_collapse}
below) is common to both; the present paper restricts attention to the
existing LeapHybridCQM and LeapHybridBQM service interfaces, while the
companion paper builds a custom direct-QPU pipeline.  Each paper stands
alone.

% ============================================================================
\section{Background}
\label{sec:background}
% ============================================================================

\subsection{Cardinality-Constrained Portfolio Selection}

The classical mean-variance portfolio problem~\citep{markowitz1952} selects asset weights to
maximize expected return for a given risk level. Adding a cardinality constraint (select
exactly $K$ out of $N$ candidate assets) makes the problem NP-hard in
general~\citep{bertsimas2009cardinality,lucas2014} and motivates a long
line of mixed-integer quadratic-programming formulations in the OR /
finance literature~\citep{bertsimas2009cardinality}. In the binary formulation, each asset $i$ has a selection variable
$z_i \in \{0, 1\}$ and the objective is:
\begin{equation}
    \min_{z \in \{0,1\}^N} \; -\boldsymbol{\mu}^\top z + \lambda \, z^\top \Sigma \, z
    \quad \text{s.t.} \quad \mathbf{1}^\top z = K,
    \label{eq:basic}
\end{equation}
where $\boldsymbol{\mu}$ is the expected return vector, $\Sigma$ is the covariance matrix, and
$\lambda > 0$ is the risk-aversion parameter.

\subsection{Optional Mean-Variance-Turnover Extension}

For completeness, repeated re-optimization across consecutive decision
periods can extend~\eqref{eq:basic} with a switching-cost penalty:
\begin{equation}
    \min_{z \in \{0,1\}^N} \; -\boldsymbol{\mu}^\top z + \lambda \, z^\top \Sigma \, z
    + \boldsymbol{\tau}^\top |z - z_{t-1}|
    \quad \text{s.t.} \quad \mathbf{1}^\top z = K,
    \label{eq:mvt}
\end{equation}
where $z_{t-1}$ is the previous binary portfolio and $\boldsymbol{\tau}$ is the per-asset
switching cost. Because both $z$ and $z_{t-1}$ are binary, the absolute-value term is
linear: $|z_i - z_{t-1,i}| = z_i(1 - z_{t-1,i}) + (1 - z_i)z_{t-1,i}$, which absorbs into
the QUBO diagonal without introducing auxiliary variables or additional pairwise couplers.
All experiments reported below set $\boldsymbol{\tau}=\mathbf{0}$ and do
not pass a previous portfolio, so their empirical objective is the
single-period MV objective~\eqref{eq:basic}.  Equation~\eqref{eq:mvt}
establishes that a fixed-previous-portfolio turnover extension would not
change the graph-support argument; non-zero turnover is not benchmarked here.

\subsection{Penalty-Encoded QUBO (BQM Path)}

The standard QUBO encoding promotes the cardinality constraint into the objective via a
quadratic penalty:
\begin{equation}
    Q_{\text{BQM}} = -\text{diag}(\boldsymbol{\mu}) + \lambda \Sigma
    + A \cdot \mathbf{1}\mathbf{1}^\top - 2AK \cdot I
    + \text{diag}(\boldsymbol{\tau}_{\text{shift}}),
    \label{eq:qubo}
\end{equation}
where $A > 0$ is the penalty weight and $\boldsymbol{\tau}_{\text{shift},i} = \tau_i(1 -
2z_{t-1,i})$. At the tested penalty scale, the $A \cdot
\mathbf{1}\mathbf{1}^\top$ term makes the QUBO fully connected
regardless of the sparsity of $\Sigma$, a phenomenon termed \emph{constraint
dilution}~\citep{verma2020,lozano2026penalty}. This full connectivity forces $N(N-1)/2$
couplers in the logical graph and long minor-embedding chains on hardware topologies with
bounded degree.

\subsection{Constraint-Native CQM (CQM Path)}

D-Wave's constrained quadratic model (CQM) interface accepts the cardinality constraint
natively, eliminating the penalty term entirely:
\begin{equation}
    Q_{\text{CQM}} = -\text{diag}(\boldsymbol{\mu}) + \lambda \Sigma
    + \text{diag}(\boldsymbol{\tau}_{\text{shift}}),
    \quad \text{constraint:} \; \mathbf{1}^\top z = K.
    \label{eq:cqm}
\end{equation}
The CQM objective preserves the sparsity structure of $\Sigma$. The hybrid solver handles
constraint satisfaction internally through a combination of classical and quantum resources.

\subsection{Density-Axis Collapse}
\label{sec:density_collapse}

A structural consequence of~\eqref{eq:qubo} is that the original covariance
density $\rho$ of $\Sigma$ is irrelevant to the encoded logical graph density
when using penalty encoding: except when an original coefficient exactly
cancels the penalty coefficient, the $A \cdot
\mathbf{1}\mathbf{1}^\top$ term produces a complete graph.  No such
cancellation occurs in the tested encodings.  We formalise the general
statement and its finite exception set as a theorem.

\begin{theorem}[Density-Axis Collapse under Quadratic Penalty Encoding]
\label{thm:density_collapse}
Adopt the symmetric QUBO convention $H(z) = z^\top Q z$ with $Q = Q^\top$;
under this convention the off-diagonal coefficient of $z_i z_j$ ($i \neq j$)
in the Hamiltonian is $2 Q_{ij}$.  Let $z \in \{0,1\}^N$ and consider a
linear equality constraint $\sum_{i=1}^N a_i z_i = b$ with
$a \in \mathbb{R}^N$ and $b \in \mathbb{R}$, enforced in QUBO form by the
quadratic penalty $A \cdot (\sum_i a_i z_i - b)^2$ added to an objective
with symmetric cost matrix $C \in \mathbb{R}^{N \times N}$.  Define the
\emph{off-diagonal support} of a matrix $M$ as
$\mathrm{supp_{off}}(M) := \{(i,j) : i < j,\, M_{ij} \neq 0\}$.  Then for
every penalty weight $A$ outside the finite set
$\{-C_{ij}/(a_i a_j) : i < j,\, a_i a_j \neq 0\}$,
\begin{equation}
  \mathrm{supp_{off}}(Q)
  \;=\;
  \mathrm{supp_{off}}(C)
  \;\cup\;
  \{(i,j) : i < j,\, a_i a_j \neq 0\}.
  \label{eq:supp-off-Q}
\end{equation}
In particular, if $a_i \neq 0$ for all $i$ (e.g., $a = \mathbf{1}$ for an
exact-cardinality constraint), then $\mathrm{supp_{off}}(Q) =
\{(i,j) : 1 \leq i < j \leq N\}$ and the \emph{logical graph} of $Q$ ---
defined as the graph on $[N]$ with edge set
$\{\{i,j\} : i \neq j,\, Q_{ij} \neq 0\}$ --- is the complete graph $K_N$.
\end{theorem}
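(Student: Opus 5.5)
The plan is to expand the penalty explicitly and read off the off-diagonal entries of $Q$ coordinate by coordinate.

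\textbf{Step 1: expand the penalty.} Using $z_i^2 = z_i$ for binary variables, write
\[
(\textstyle\sum_i a_i z_i - b)^2 = z^\top (a a^\top) z - 2b\, a^\top z + b^2 .
\]
The linear term $-2b\,a^\top z$ equals $z^\top \mathrm{diag}(-2b\,a) z$, so it goes onto the diagonal. The constant $b^2$ does not enter $Q$.

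\textbf{Step 2: identify $Q$.} The encoded symmetric matrix is $Q = C + A\,aa^\top + D$ with $D$ diagonal. Its off-diagonal entries are therefore
\[
Q_{ij} = C_{ij} + A a_i a_j \quad (i<j).
\]
Under the stated convention the Hamiltonian coefficient is $2Q_{ij}$, which vanishes exactly when $Q_{ij}$ does. So the factor of two does not affect support.

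\textbf{Step 3: case analysis on each pair $i<j$.}
\begin{itemize}
\item If $a_i a_j = 0$, then $Q_{ij} = C_{ij}$, so $(i,j)$ lies in $\mathrm{supp_{off}}(Q)$ iff it lies in $\mathrm{supp_{off}}(C)$.
\item If $a_i a_j \neq 0$, then $Q_{ij}=0$ iff $A = -C_{ij}/(a_i a_j)$. This value belongs to the excluded finite set (at most $\binom{N}{2}$ elements), so for admissible $A$ we have $Q_{ij}\neq 0$. Both sides of \eqref{eq:supp-off-Q} then contain $(i,j)$.
\end{itemize}
Combining the two cases pair by pair gives the claimed equality \eqref{eq:supp-off-Q}.

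\textbf{Step 4: the special case.} When every $a_i\neq 0$, the second set in \eqref{eq:supp-off-Q} is all pairs, so $\mathrm{supp_{off}}(Q)$ is all pairs. Because $Q$ is symmetric, the logical graph is $K_N$.

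There is no real obstacle. The only care needed is bookkeeping:
\begin{itemize}
\item checking that the linear and diagonal contributions, including $z_i^2=z_i$ and the $-2AK\,I$ term of \eqref{eq:qubo}, never touch off-diagonal entries;
\item checking that the symmetric-convention factor of two does not change which entries vanish.
\end{itemize}
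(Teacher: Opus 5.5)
Your proposal is correct and follows the paper's proof essentially step for step: expand the penalty, observe that only the $A\,aa^\top$ term reaches off-diagonal entries so that $Q_{ij} = C_{ij} + A a_i a_j$, and exclude the finite cancellation set $\{-C_{ij}/(a_i a_j)\}$. Your explicit case split on $a_i a_j = 0$ versus $a_i a_j \neq 0$, and your remark that the factor-of-two convention cannot change which entries vanish, spell out what the paper compresses into ``holds entry-by-entry.''
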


\begin{proof}
Expanding the penalty,
\begin{align*}
A \cdot (a^\top z - b)^2
   &= A \cdot z^\top (a a^\top) z \;-\; 2Ab \cdot a^\top z \;+\; A b^2.
\end{align*}
The first term contributes $A \cdot a_i a_j$ to the off-diagonal entry
$Q_{ij}$ for every $i \neq j$ with $a_i a_j \neq 0$; equivalently, it
contributes $2 A \cdot a_i a_j$ to the coefficient of $z_i z_j$ in an
upper-triangular QUBO convention.  The first term also contributes
$A \cdot a_i^2$ to $Q_{ii}$.  The second is linear in $z$ and
contributes $-2Ab \cdot a_i$ to $Q_{ii}$.  The constant $Ab^2$ shifts
the total energy.  Combining with the original off-diagonal contribution
from $C$, we obtain $Q_{ij} = C_{ij} + A \cdot a_i a_j$ for $i \neq j$;
this is non-zero whenever $a_i a_j \neq 0$ and $A \neq -C_{ij}/(a_i a_j)$.
The exception set, $\{-C_{ij}/(a_i a_j) : i < j,\, a_i a_j \neq 0\}$, is
finite (at most $\binom{N}{2}$ values).  For $A$ outside this set,
equation~\eqref{eq:supp-off-Q} holds entry-by-entry.  When $a_i \neq 0$
for all $i$, the second set on the right-hand side equals
$\{(i,j) : i < j\}$, which yields the logical-graph claim.
\end{proof}

\medskip\noindent\emph{In plain terms.}  The cardinality penalty forces
every asset to interact with every other asset through the shared
quadratic term, so the encoded problem becomes fully connected even when
the original covariance matrix is diagonal.  For the cardinality
constraint of Section~\ref{sec:background}, $a = \mathbf{1}$ and $b = K$,
and the second set on the right of equation~\eqref{eq:supp-off-Q}
becomes $\{(i,j) : i < j\}$ regardless of $\mathrm{supp_{off}}(\Sigma)$.

\medskip\noindent\textbf{Remark} (MVT objective inherits the same collapse).
The mean-variance-turnover (MVT) objective of~\eqref{eq:mvt} differs
from the bare MV objective by a turnover term
$\boldsymbol{\tau}^\top |z - z_{t-1}|$.  For binary
$z, z_{t-1} \in \{0,1\}^N$, the identity
$|z_i - z_{t-1,i}| = z_{t-1,i} + (1 - 2 z_{t-1,i}) z_i$ holds by
enumeration, so the turnover term is linear in $z$ and contributes only
to the diagonal of $Q$.  Consequently $\mathrm{supp_{off}}(Q^{\mathrm{MVT}})
= \mathrm{supp_{off}}(Q^{\mathrm{MV}})$, and
Theorem~\ref{thm:density_collapse} applies to MVT verbatim.

\medskip\noindent\textbf{Remark} (CQM path preserves $\mathrm{supp_{off}}(\Sigma)$).
\label{rem:cqm_preserves}
The constraint-native CQM formulation in~\eqref{eq:cqm} contains no
penalty term; the cardinality constraint is held natively by the
solver and does not modify the objective coefficients.  Hence the
off-diagonal quadratic objective support is
$\mathrm{supp_{off}}(Q_{\mathrm{CQM}}) = \mathrm{supp_{off}}(\Sigma)$,
with no $A \cdot \mathbf{1}\mathbf{1}^\top$ contribution.

\begin{corollary}[Embedding-Overhead Lower Bound for $K_N$ on a Bounded-Degree Topology]
\label{cor:embedding_lower_bound}
Let $G = (V_G, E_G)$ be a hardware topology with maximum vertex degree
$d$.  Any minor embedding of $K_N$ into $G$ uses at least
$N(N-1) / d$ physical qubits in total across all chains:
\[
  L_{\mathrm{tot}} \;:=\; \sum_{i=1}^N |\mathrm{chain}_i| \;\geq\; \frac{N(N-1)}{d}.
\]
Equivalently, the per-logical physical-qubit overhead grows at least
as $\Omega(N/d)$.
\end{corollary}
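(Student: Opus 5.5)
The plan is to count edges incident to chains. A minor embedding assigns to each logical vertex $i\in[N]$ a connected vertex set $\mathrm{chain}_i\subseteq V_G$, with the chains pairwise disjoint. For every logical edge $\{i,j\}$ of $K_N$ there must be at least one physical edge of $G$ with one endpoint in $\mathrm{chain}_i$ and the other in $\mathrm{chain}_j$. I will double-count the edges leaving each chain against the degree budget $d$ of its qubits.

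\textbf{Step 1 (demand per chain).} Fix $i$. Since $K_N$ has all $N-1$ edges $\{i,j\}$ with $j\neq i$, and the chains are disjoint, chain $i$ needs at least $N-1$ distinct physical edges leaving it. Each such edge has its other endpoint in a different chain $\mathrm{chain}_j$, so the edges are distinct. Hence the edge boundary satisfies $|\partial(\mathrm{chain}_i)|\geq N-1$.

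\textbf{Step 2 (supply per chain).} Every qubit has degree at most $d$, so the number of physical edges with at least one endpoint in $\mathrm{chain}_i$ is at most $d\,|\mathrm{chain}_i|$. In particular $|\partial(\mathrm{chain}_i)|\leq d\,|\mathrm{chain}_i|$.

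\textbf{Step 3 (sum).} Combining the two steps gives $|\mathrm{chain}_i|\geq (N-1)/d$ for each $i$. Summing over the $N$ logical vertices yields $L_{\mathrm{tot}}\geq N(N-1)/d$. Dividing by $N$ gives a per-logical overhead of at least $(N-1)/d=\Omega(N/d)$.

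I could sharpen Step 2 by subtracting the $|\mathrm{chain}_i|-1$ internal edges that connectedness requires, each of which uses two endpoint slots. This only improves constants, so I will skip it.

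The only delicate point is Step 1: the $N-1$ required boundary edges must be genuinely distinct. This follows because a single physical edge has exactly two endpoints, which lie in at most two chains, so it can realize at most one logical edge incident to $i$. Once that is noted, the rest is a direct summation.

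The statement concerns $K_N$, and Theorem~\ref{thm:density_collapse} supplies $K_N$ as the logical graph of the penalty-encoded $Q_{\mathrm{BQM}}$. The bound therefore applies to the BQM path whenever $A$ lies outside the exception set of that theorem.
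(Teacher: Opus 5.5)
Your proof is correct. It uses the same double count as the paper: every logical edge needs an inter-chain physical edge, and each qubit supplies at most $d$ edge incidences. The difference is that you localize the count per chain.

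The paper argues globally. It picks one inter-chain edge for each of the $\binom{N}{2}$ logical edges and bounds the resulting $N(N-1)$ endpoint incidences by $d\,L_{\mathrm{tot}}$. That inequality is exactly the sum over $i$ of your Step~1--2 inequality $N-1\le d\,|\mathrm{chain}_i|$. Your distinctness remark plays the role of the paper's ``select one edge per logical edge.''

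The global version is a line shorter and suffices for the stated total bound. Your localization, however, proves strictly more: a bound on \emph{every} chain, $|\mathrm{chain}_i|\ge\lceil (N-1)/d\rceil$, not just on the sum or mean. Rounding per chain gives $L_{\mathrm{tot}}\ge N\lceil (N-1)/d\rceil$. That is $480$ rather than the paper's $422$ for Pegasus at $N=80$, and $320$ rather than $316$ for Zephyr. It also removes the distributional caveat in the scope remark after Lemma~\ref{lem:embedding_growth}. Under that lemma's independent-edge model, every chain, and hence the average over chains, has break probability at least $1-(1-p)^{\lceil (N-1)/d\rceil-1}$, with no stochastic-ordering assumption on chain lengths.
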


\begin{proof}
$K_N$ has $\binom{N}{2}$ logical edges.  Each logical edge requires at
least one \emph{inter-chain} physical edge --- a physical edge in $E_G$
with one endpoint in $\mathrm{chain}_i$ and the other in
$\mathrm{chain}_j$ for the respective logical qubits $i \neq j$.
Select one such physical edge for each logical edge.  This selected set
has $2\binom{N}{2}=N(N-1)$ endpoints.  Intra-chain edges play no role
in this count and only reduce the available degree budget.  Because each
physical vertex has degree at most $d$, the selected endpoint count is
at most $L_{\mathrm{tot}}d$.  Rearranging gives the claimed bound.
\end{proof}

\medskip\noindent
For Pegasus ($d = 15$) at $N = 80$, integrality strengthens the bound to
$L_{\mathrm{tot}}\geq 422$ physical qubits; the observed Pegasus mean is $769$,
consistent with the bound.  For Zephyr ($d = 20$) at the same $N$ the
bound gives $\approx 316$; the observed Zephyr mean is $731$.  (The
two-topology pooled mean reported in Table~\ref{tab:a1_cbf} is $750$.)  Tighter
topology-specific constants for Pegasus and Zephyr appear
in~\citet{boothby-bunyk-raymond-roy-2020}.

\begin{lemma}[Chain-Length Monotonicity under an Independent-Edge Break Model]
\label{lem:embedding_growth}
Under Theorem~\ref{thm:density_collapse}, the mean chain length for the
penalty-encoded $K_N$ embedding satisfies $\bar L(N) = \Omega(N/d)$
(Corollary~\ref{cor:embedding_lower_bound}).  Consequently, all else
equal --- at fixed chain strength $J_c$, fixed annealing schedule, fixed
calibration, and under a per-edge intra-chain disagreement model in
which each physical edge in a chain has independent break probability
$p$ --- the break probability of an individual chain of length $L$
equals $1 - (1-p)^{L-1}$ and is monotone non-decreasing in $L$.
\end{lemma}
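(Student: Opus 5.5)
The lemma has two independent parts, and I would prove them separately.

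\emph{Mean chain length.} For the first part I would use Theorem~\ref{thm:density_collapse} with $a=\mathbf{1}$. This gives that the logical graph of the penalty-encoded QUBO is $K_N$, so any hardware realisation is a minor embedding of $K_N$. Corollary~\ref{cor:embedding_lower_bound} then yields $L_{\mathrm{tot}}\geq N(N-1)/d$. Dividing by the $N$ chains gives
\[
\bar L(N)=L_{\mathrm{tot}}/N\geq (N-1)/d .
\]
This is $\Omega(N/d)$, since $(N-1)/d\geq N/(2d)$ for $N\geq 2$. Nothing beyond this is needed for the first part.

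\emph{Break probability of one chain.} For the second part I would first say precisely what a chain is: a connected set of physical qubits. I would take the chain's break event to be the event that at least one designated intra-chain physical edge disagrees. The formula $1-(1-p)^{L-1}$ tells us the intended model: the chain is tied together by a spanning tree of its induced subgraph, which has exactly $L-1$ edges. For a path-shaped chain, the usual case in minor embeddings, this is automatic.

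Under the stated independence assumption, each of these $L-1$ edges is intact with probability $1-p$. Independence then gives
\[
\Pr[\text{no break}]=(1-p)^{L-1}, \qquad \Pr[\text{break}]=1-(1-p)^{L-1}.
\]
Monotonicity in $L$ follows because $0\le 1-p\le 1$. Hence $(1-p)^{L}\le(1-p)^{L-1}$, so the break probability is non-decreasing, and strictly increasing when $0<p<1$. The "all else equal" hypotheses (fixed $J_c$, schedule and calibration) are what make a single $p$ shared across edges and chains legitimate. I would state them as assumptions of the model, not derive them.

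\emph{Main obstacle.} The hard part is not computational but interpretive. If a chain's induced subgraph contains cycles, then counting "each physical edge in a chain" would give more than $L-1$ edges, and the exact formula would fail. I would deal with this by fixing the model on a spanning tree of the chain, and I would note in a remark that using all induced edges gives $1-(1-p)^{|E_{\mathrm{chain}}|}$ with $|E_{\mathrm{chain}}|\ge L-1$. That version is still a non-decreasing function of the edge count, so the qualitative conclusion holds either way.

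I would also stress what the lemma does not say. Combining the two parts shows that the typical chain length, and with it the per-chain break probability, grows with $N$. This is only a conditional, model-based link, not a statement about empirical chain-break rates.
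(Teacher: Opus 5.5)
Your proposal is correct and follows the paper's own inline argument. For the first part, you average Corollary~\ref{cor:embedding_lower_bound} over the $N$ chains to get $\bar L(N)\geq (N-1)/d$. For the second part, the no-break probability $(1-p)^{L-1}$ is a product of $L-1$ independent intact-edge probabilities on a tree-shaped chain, so it is non-increasing in $L$.

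Your spanning-tree reading makes explicit what the exponent $L-1$ presupposes, and it is in fact forced for a disagreement model. Around any cycle the number of disagreeing edges must be even, so disagreements on all induced edges of a cyclic chain cannot be independent. Likewise, the first part tacitly needs $A$ outside the exceptional set of Theorem~\ref{thm:density_collapse}, so that the logical graph is really $K_N$.
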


\medskip\noindent
\emph{Scope.}  The lemma is stated per chain.  The \emph{mean} break
probability across an embedding tracks the mean chain length
$\bar L(N)$ only under additional assumptions on the chain-length
distribution (e.g., stochastic ordering of the distributions as $N$
grows); we use the lemma as a heuristic scaling argument connecting
Corollary~\ref{cor:embedding_lower_bound} to the observed chain-break
growth, not as a sharp bound.

\medskip\noindent
\emph{Empirical manifestation.}  We confirm density-axis collapse
empirically in Section~\ref{sec:a1}: chain-break fractions and embedding
overheads are nearly identical across diagonal
($\rho \approx 2/N$, decaying with $N$), block
($\rho \in [0.11, 0.25]$), and dense ($\rho \approx 1.0$) families at
each $N$.  Figure~\ref{fig:density_collapse_schematic} illustrates the
mechanism schematically; Figure~\ref{fig:formulation_choice} summarises
the practical consequences for D-Wave practitioners.

\begin{figure}[t]
    \centering
    \includegraphics[width=0.85\textwidth]{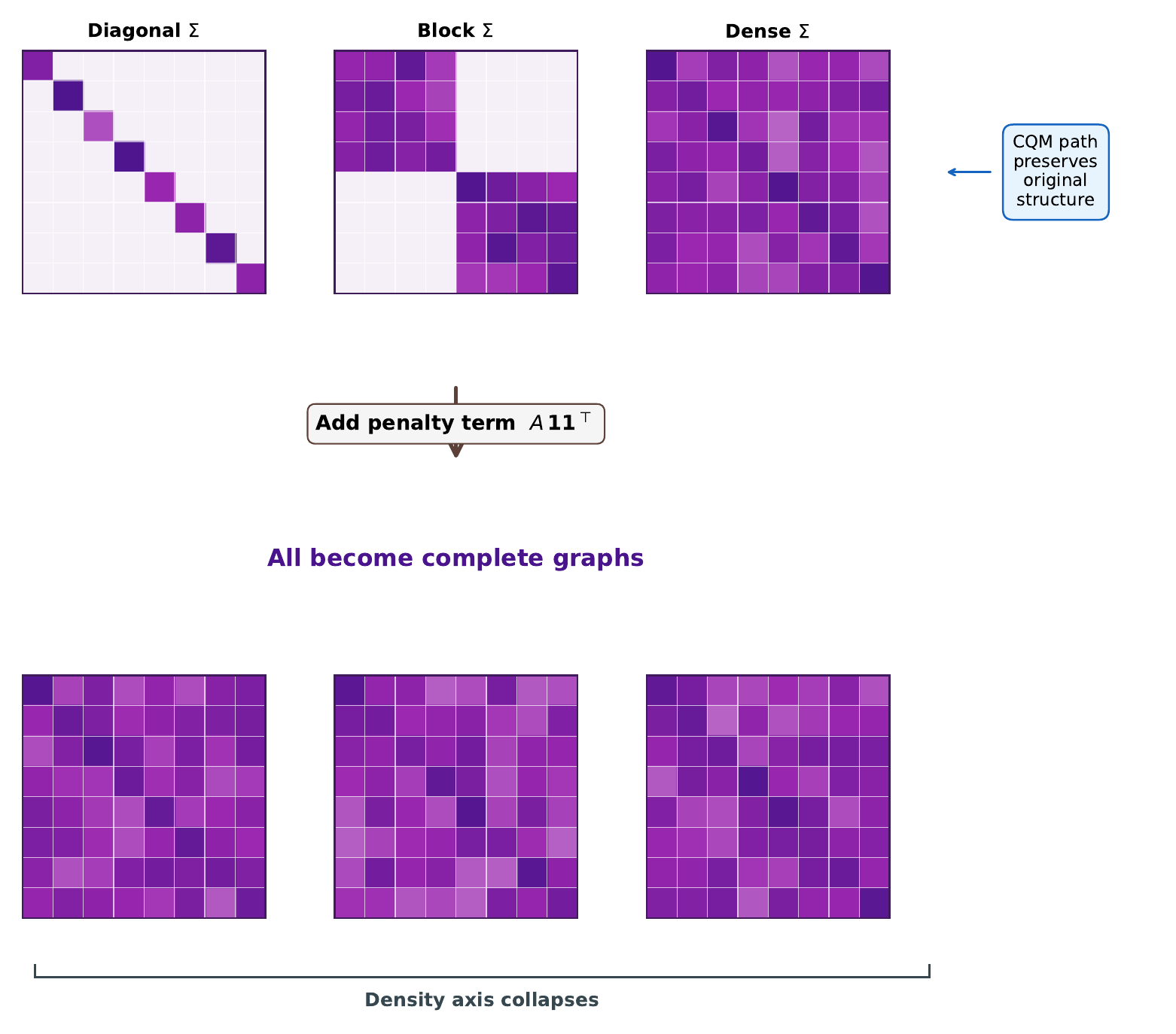}
    \caption{Density-axis collapse under penalty encoding. Three structurally different
    covariance matrices (diagonal, block, dense) all become complete graphs after adding the
    cardinality penalty $A \cdot \mathbf{1}\mathbf{1}^\top$. The CQM path preserves the
    original structure.}
    \label{fig:density_collapse_schematic}
\end{figure}

\begin{figure}[t]
    \centering
    \includegraphics[width=0.75\textwidth]{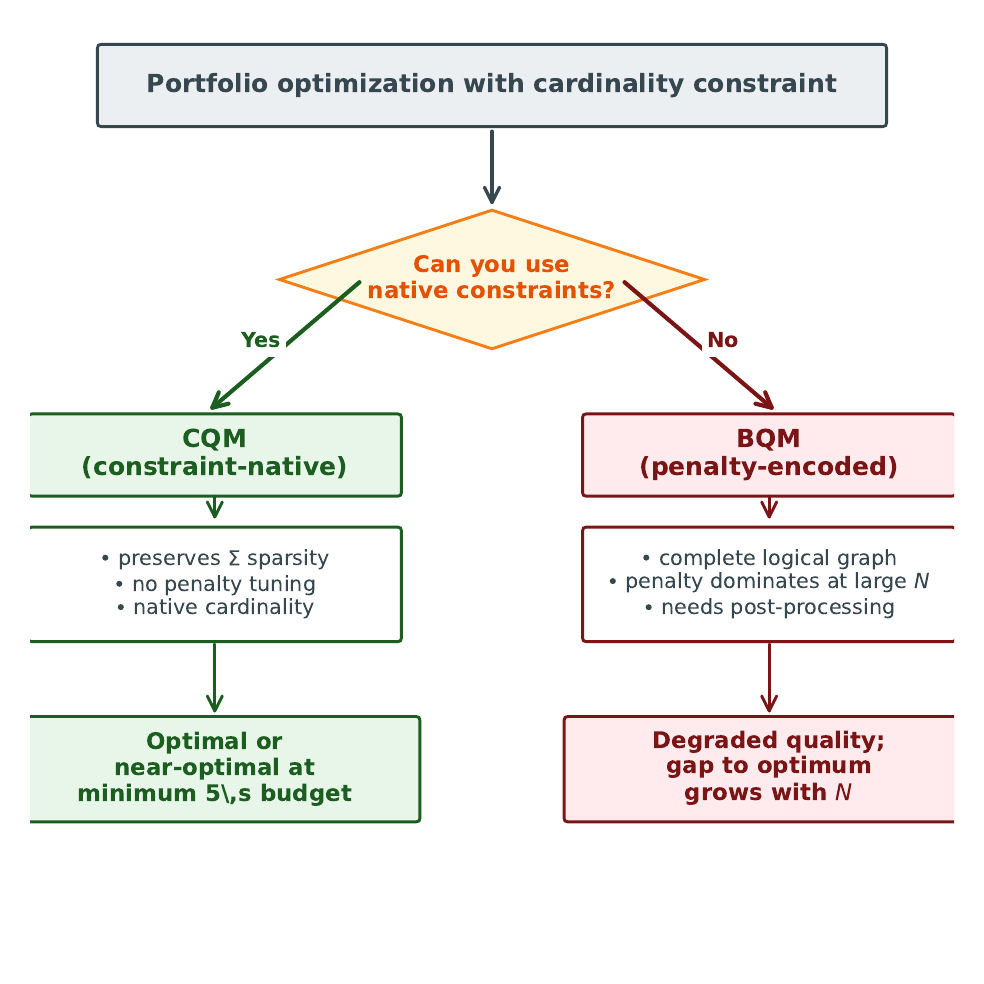}
    \caption{Formulation choice on the tested D-Wave portfolio instances. The
    constraint-native CQM objective preserves sparsity, requires no user-set
    penalty, and matches the available optima at the minimum budget. The
    penalty-encoded BQM creates a dense submitted graph and the hybrid-BQM
    results degrade with $N$ under the tested settings.}
    \label{fig:formulation_choice}
\end{figure}

\subsection{D-Wave Hardware}
\label{sec:hardware}

We use two D-Wave systems:
\begin{itemize}
    \item \textbf{Advantage\_system4.1} (Pegasus topology): 5{,}760 ideal qubits, degree-15
    connectivity. The active qubit count of $5{,}627$ reflects the working graph of
    Advantage\_system4.1 during the experimental campaign (April 2026); embedding
    statistics in Section~\ref{sec:a1} are computed against this active set.
    \item \textbf{Advantage2\_system1.13} (Zephyr topology): 4{,}800 ideal qubits, degree-20
    connectivity.\footnote{On 2026-04-10, D-Wave renamed this solver to
    \texttt{Advantage2\_system1} and removed qubit 4374 (and its couplers) from the working
    graph. Embedding records were generated across this study and the companion
    penalty-encoding paper; we verified that qubit 4374 was not selected by the embedder for
    any chain, so the rename does not affect any reported result. All experiments here were
    run before the rename.}
\end{itemize}
Direct QPU access requires minor embedding of the logical QUBO graph into the hardware
topology. Hybrid solvers (LeapHybridBQMSampler, LeapHybridCQMSampler) accept arbitrary problem
structures and manage embedding and QPU allocation internally, with a minimum time limit of
approximately 5 seconds~\citep{dwave_hybrid}. Figure~\ref{fig:hybrid_blackbox} illustrates the
hybrid solver architecture.

\begin{figure}[t]
    \centering
    \includegraphics[width=0.85\textwidth]{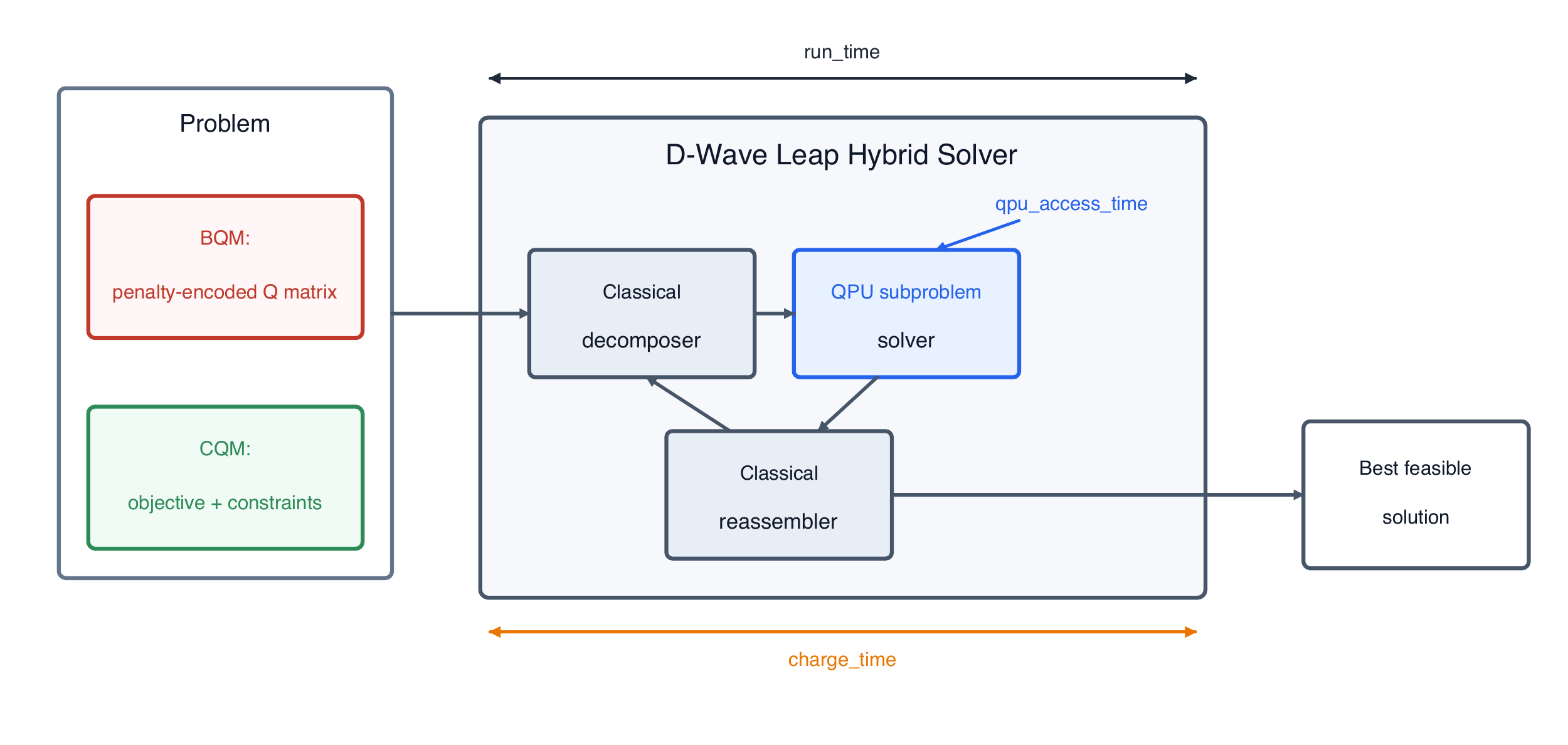}
    \caption{D-Wave hybrid solver architecture. The solver iterates between classical
    decomposition and QPU subproblem solving. Timing fields (\texttt{run\_time},
    \texttt{charge\_time}, \texttt{qpu\_access\_time}) are reported separately, allowing
    analysis of whether the QPU was accessed and for how long; these fields do
    not identify its causal contribution to the final solution.}
    \label{fig:hybrid_blackbox}
\end{figure}

% ============================================================================
\section{Operational Decomposition Audit Protocol}
\label{sec:audit-protocol}
% ============================================================================

The findings of this paper depend on a small set of operational
metrics that are, in principle, observable from any hybrid
quantum-classical solver that exposes per-run resource telemetry.  We
formalise these metrics as a \emph{decomposition audit protocol}.  The
protocol is a reusable methodological contribution independent of the
specific D-Wave LeapHybrid services audited in
Sections~\ref{sec:results}--\ref{sec:discussion}, and is presented as
a \emph{candidate minimum reporting template} for future hybrid-solver
benchmarks, for community discussion.

\subsection{Metrics}
\label{sec:audit-metrics}

Let a hybrid solver run produce a sampleset $\mathcal{S}$ with timing
fields $t_{\mathrm{run}}$ (total run time), $t_{\mathrm{charge}}$
(charged service time), and $t_{\mathrm{QPU}}$ (cumulative QPU-access
time, summed over all sub-problem QPU calls within the run), each as
exposed by the D-Wave Ocean SDK~\citep{dwave_hybrid_solvers_whitepaper};
  a final objective value $f_{\mathcal{S}}$; and a classical reference
  objective $f^\star$ on the same instance (a proven optimum where available,
  otherwise a clearly labelled incumbent).  Let $Q_{\mathrm{encoded}}$ be the
QUBO matrix actually submitted to the solver, $\Sigma$ the original
covariance matrix (or other cost matrix) before encoding, and
$|E(\cdot)|$ the off-diagonal edge count of a matrix (cf.\
$\mathrm{supp_{off}}$ in Theorem~\ref{thm:density_collapse}).  We
define four metrics.

\paragraph{(M1) QPU wall-clock fraction.}
\begin{equation}
  r_{\mathrm{QPU}} \;:=\; \frac{t_{\mathrm{QPU}}}{t_{\mathrm{run}}}
  \;\geq\; 0.
  \label{eq:rQPU}
\end{equation}
$r_{\mathrm{QPU}}$ measures the fraction of wall-clock that the
service spent on the quantum processor, summed over all sub-problem
QPU calls.  It is a resource-accounting quantity, not an indicator of
causal importance: high $r_{\mathrm{QPU}}$ does not
guarantee that the QPU samples were causally important to the final
solution; low $r_{\mathrm{QPU}}$ \emph{does} bound the share of
wall-clock attributable to quantum sampling from above.  QPU-access
time additionally includes programming and readout overhead (across our
$270$ synthetic direct-QPU records, mean programming time is
$25.2$\,ms of a $171.0$\,ms mean access window), so $t_{\mathrm{QPU}}$
\emph{upper-bounds} the pure-anneal share of wall-clock; the fraction
attributable to annealing itself is smaller still.  In D-Wave
Ocean, $t_{\mathrm{QPU}}$ is the per-submission cumulative QPU access
time, read either as
\texttt{sampleset.info['qpu\_access\_time']}
or, when nested timing is exposed, as
\texttt{sampleset.info['timing']['qpu\_access\_time']}.
For the audited Leap records, QPU accesses are reported within the
service run and $r_{\mathrm{QPU}}\leq1$.  On a platform that sums
concurrent QPU calls, the same ratio may exceed one and should be
labelled cumulative QPU-time intensity rather than a literal wall-clock
fraction.

\paragraph{(M2) Optimality gap.}
\begin{equation}
  g_F \;:=\; \frac{f_{\mathcal{S}} \;-\; f^\star}
                  {\max\!\big(1,\, |f^\star|\big)}.
  \label{eq:gF}
\end{equation}
$g_F$ is the regularised objective gap to the classical reference.  The
denominator is regularised to handle $|f^\star| \to 0$.  A negative
$g_F$ indicates improvement over an incumbent; for a valid proven optimum
it should not occur.  $g_F$ is a \emph{solver-side} metric;
portfolio-level financial metrics (e.g., Sharpe and drawdown) are
computed on realised return time series with their own denominators
(Section~\ref{sec:ff49-financial}) and are not interchangeable with $g_F$.

\paragraph{(M3) Encoded-graph density amplification.}
Let $E_{\mathrm{enc}} := |E(Q_{\mathrm{encoded}})|$ and
$E_{\mathrm{orig}} := |E(\Sigma)|$.  Define
\begin{equation}
  d_{\mathrm{amp}} \;:=\;
  \begin{cases}
    E_{\mathrm{enc}} / E_{\mathrm{orig}} & \text{if } E_{\mathrm{orig}} > 0, \\
    \infty                                 & \text{if } E_{\mathrm{orig}} = 0
                                                \text{ and } E_{\mathrm{enc}} > 0, \\
    1                                       & \text{if } E_{\mathrm{enc}} = E_{\mathrm{orig}} = 0.
  \end{cases}
  \label{eq:damp}
\end{equation}
By Theorem~\ref{thm:density_collapse}, penalty-encoded paths satisfy
$d_{\mathrm{amp}} = \binom{N}{2}/E_{\mathrm{orig}}$ when
$E_{\mathrm{orig}} > 0$, and $d_{\mathrm{amp}} = \infty$ when $\Sigma$
is diagonal; constraint-native CQM paths satisfy $d_{\mathrm{amp}} = 1$
by the CQM support-preservation remark after
Theorem~\ref{thm:density_collapse}.  We recommend reporting raw
$E_{\mathrm{enc}}$ and $E_{\mathrm{orig}}$ alongside the ratio to keep
the diagonal-$\Sigma$ case visible.

\paragraph{(M4) Service-output objective variance.}
\begin{equation}
  v_{\mathrm{service}}(N, \rho, B)
  \;:=\;
  \mathrm{Var}\!\Big(f_{\mathcal{S}} \;\Big|\; N,\, \rho,\, B,\,
                        r \in \mathcal{R}\Big),
  \label{eq:vservice}
\end{equation}
the variance of the solver's final objective at fixed problem
dimension $N$, density family $\rho$, and wall-clock budget $B$, over
a set $\mathcal{R}$ of repeated calls with $|\mathcal{R}| \geq 10$.
$v_{\mathrm{service}} = 0$ indicates objective stability in the saved
records on the tested cell; it does not establish identical decision vectors
or deterministic internal execution.  The $|\mathcal{R}| = 1$ case is excluded because
$v_{\mathrm{service}} = 0$ would be tautological.

\subsection{Audit procedure}
\label{sec:audit-procedure}

We propose that any hybrid quantum-classical benchmark report the
following items.

\begin{enumerate}
\item[(P1)] \textbf{All three timing fields} per run: $t_{\mathrm{run}}$,
  $t_{\mathrm{charge}}$, $t_{\mathrm{QPU}}$.  Do not report only
  $t_{\mathrm{QPU}}$.
\item[(P2)] \textbf{$r_{\mathrm{QPU}}$ distribution} per $(N, \rho)$
  cell, with at least mean and inter-quartile range; not a point
  estimate.
\item[(P3)] \textbf{Optimality gap and matched-budget baseline.} Compute
  $g_F$ against the strongest available proven optimum or bound, and report
  separately a strong classical heuristic at matched wall-clock.  Report each
  classical solver's name, version, hardware, stopping rule, and wall-clock;
  do not present a longer-budget optimum as a time-matched comparison.
\item[(P4)] \textbf{Encoded-graph density amplification
  $d_{\mathrm{amp}}$} for every formulation variant.  Report raw
  $E_{\mathrm{enc}}$ and $E_{\mathrm{orig}}$ alongside the ratio.
  Explicitly contrast penalty-encoded and constraint-native paths.
\item[(P5)] \textbf{Feasibility rate before and after post-processing
  repair}.  For penalty-encoded paths, separate the raw-sample
  feasibility rate from the post-projection feasibility rate.
\item[(P6)] \textbf{Service-variance $v_{\mathrm{service}}$} over
  $\geq 10$ repeated calls per cell.  Distinguishes objective-stable
  cells from cells with measurable output variation.
\item[(P7)] \textbf{Solver identity and campaign window}.  For D-Wave
  Leap, record the sampler's solver identity at submission time
  (\texttt{sampler.solver.identity}, or
  \texttt{sampler.solver.name} as a fallback) plus the submission
  \texttt{problem\_id}.  The legacy
  \texttt{sampleset.info['solver']}
  field is unreliable across SDK versions and should not be used as
  the canonical version source.  Also record the start and end dates
  of the run campaign.
\end{enumerate}

\subsection{Self-application and protocol compliance}
\label{sec:audit-self-application}

The protocol is also a checklist against which this campaign can fail.
Table~\ref{tab:protocol_compliance} records that self-audit rather than
retrofitting the requirements to the available files.

\begin{table}[ht]
\centering
\caption{Self-application of the proposed reporting procedure. ``Partial''
marks information that the original campaign did not persist or a grid that
was not fully repeated; no missing field is reconstructed as if observed.}
\label{tab:protocol_compliance}
\begin{tabular}{@{}p{0.06\textwidth}p{0.12\textwidth}p{0.73\textwidth}@{}}
\toprule
Item & Status & Evidence or limitation \\
\midrule
P1 & Partial & All three fields retained for the 162-run main grid; later budget/repeat records omit fields. \\
P2 & Complete & Mean/IQR reported by $N$ in Table~\ref{tab:timing_breakdown} and by family in the data release. \\
P3 & Partial & Gurobi proof/bound plus 5-s Tabu; no proof at $N\geq400$. \\
P4 & Complete$^\dagger$ & Objective-graph counts reconstructed exactly from saved inputs. \\
P5 & Partial & Main BQM/CQM and synthetic-QPU rates retained; equity raw sample absent. \\
P6 & Partial & Ten repeats per planned cell except one cell with four calls. \\
P7 & Partial & Campaign and QPU identities retained; hybrid IDs/problem IDs absent. \\
\bottomrule
\end{tabular}
\\[2pt]
\footnotesize $^\dagger$ Formulation-level support only; proprietary service-internal
transformations are not observable.
\end{table}

Concretely, the main synthetic campaign contains raw-feasible solutions in
$81/81$ hybrid-BQM calls and feasible solutions in $81/81$ hybrid-CQM calls.
In the synthetic direct-QPU campaign, $0/270$ retained best raw
solutions satisfy exact-$K$ before repair and $270/270$ do after
projection.  The equity-QPU
records retained only the projected selection, so their raw feasibility rate
cannot be recovered.  Hybrid and direct-QPU campaigns ran from 7--9 April
2026 (UTC).  The records identify the QPU backends reported in
Section~\ref{sec:hardware}, but do not retain the programmatic hybrid-backend
identity or per-submission problem ID.  These omissions motivate P5--P7 and
limit cross-version replication of the service-level findings.
The main 162-run hybrid grid retains all three timing fields; the
78-run budget sweep omits charge time, while the 228-run repeated-call
file omits both run and charge time.  Consequently, timing-distribution
claims use the main grid, and the later files support only the fields
they actually retain.

\subsection{Applicability}
\label{sec:audit-applicability}

The protocol applies directly to hybrid services that expose the three
timing fields above as first-class telemetry: specifically, D-Wave
Leap hybrid solvers (LeapHybridCQMSampler and LeapHybridBQMSampler),
which return \texttt{run\_time}, \texttt{charge\_time}, and
\texttt{qpu\_access\_time} on every submission.  Adapting the protocol
to other hybrid services (AWS Braket Hybrid Jobs, Qiskit Runtime
Estimator-based hybrid optimisers) is possible but requires
platform-specific instrumentation: those services expose only partial
or aggregate metrics (e.g., quantum-task counts and latencies, or
\texttt{quantum\_seconds}) rather than a direct
$(t_{\mathrm{run}}, t_{\mathrm{charge}}, t_{\mathrm{QPU}})$ tuple.
We do not validate the protocol on these alternative services in the
present paper.

\subsection{Reference implementation}
\label{sec:audit-reference-impl}

A reference implementation of the four metrics, ingesting a saved
hybrid sampleset record and emitting a structured JSON audit report,
accompanies this paper as
\path{analysis/operational_audit.py}, with metric functions in
\path{paper2.analysis.audit_protocol}.  The same repository contains
the analysis scripts that produced every empirical result reported in
Sections~\ref{sec:results}--\ref{sec:discussion}, including the
matched-budget classical-reproducibility comparison
(Section~\ref{sec:qpu-replacement-ablation}).

\subsection{Interpretive limits}
\label{sec:audit-limits}

The protocol has several interpretive limits, which we state
explicitly.

\paragraph{$r_{\mathrm{QPU}}$ is wall-clock, not causal.}
A run with $r_{\mathrm{QPU}} = 0.5$ may still be classical-dominated
if the QPU samples are ignored by the reassembler; a run with
$r_{\mathrm{QPU}} = 0.007$ may still have a QPU contribution that is
structurally necessary if the classical decomposer's logic is tuned to
QPU-shaped sub-problems.  The protocol does not resolve this
ambiguity.  Resolving it requires a \emph{QPU-replacement ablation} in
which the QPU sub-solver is replaced by an equivalent classical
sub-solver at matched per-call budget, \emph{inside} the pipeline.  We
cannot perform that experiment on a proprietary service.
Section~\ref{sec:qpu-replacement-ablation} instead provides an
external, matched-budget classical-reproducibility comparison, which
tests objective achievability but not internal causality.

\paragraph{Solver-version drift.}
Commercial hybrid services version their internal solver behind a
stable user-facing interface; timing distributions and qualitative
behaviour can change between solver-version updates.  The protocol
therefore requires (P7) the explicit recording of the solver identity
and campaign window, and benchmark results should not be extrapolated
across version boundaries without re-running the audit.  We flag a
  self-audit finding: the campaigns reported here persisted neither
  per-run problem IDs nor the programmatic hybrid solver-identity string in
  the released records.  QPU backend identities and the campaign window are
  recoverable, but the hybrid backend version is not.  Full P7 compliance
  requires capturing all identifiers at run time.

\paragraph{Working-graph and calibration drift.}
For direct-QPU paths, the working graph (active qubits and couplers)
and per-qubit calibration can drift across calibration cycles.  Runs
within a single calibration cycle are comparable; runs across cycles
are not, unless the calibration is explicitly recorded and accounted
for.

\paragraph{Timing-field granularity and client/service wall-clock.}
$t_{\mathrm{run}}$ as reported by the service is the service-side
wall-clock; the client-side wall-clock (including network round-trips
and queue waits) is generally larger and depends on operational
conditions.  $r_{\mathrm{QPU}}$ uses the service-side denominator and
therefore overstates the QPU's share of the user-observed latency.

\paragraph{No latency or queue-time metrics.}
In production deployment, queue depth, batch latency, and
service-level objectives matter as much as the four metrics above.
The protocol is a measurement framework for solver behaviour, not a
service-level-agreement framework.

% ============================================================================
\section{Experimental Setup}
\label{sec:setup}
% ============================================================================

\subsection{Synthetic Controlled Instances}

We generate controlled benchmark instances across three covariance-density families:
\begin{itemize}
    \item \textbf{Diagonal}: near-banded covariance with light nearest-neighbour off-diagonal
    noise ($\epsilon = 0.02$); per-instance density decays as $\rho \approx 2/N$, from
    $\approx 0.20$ at $N{=}10$ to below $0.01$ at $N \geq 200$.
    \item \textbf{Block-diagonal}: equicorrelated blocks with zero cross-block interaction;
    per-instance density $\rho \in [0.11, 0.25]$ depending on $N$ and block structure.
    \item \textbf{Dense}: full covariance matrices from Wishart draws ($\rho \approx 1.0$).
\end{itemize}
The structural results in Section~\ref{sec:density_collapse} depend only on the support
pattern of the encoded QUBO and are invariant to any positive linear rescaling of
$\boldsymbol{\mu}$. The empirical CQM-vs-BQM quality comparison in
Section~\ref{sec:a2a3} is, by contrast, scale-dependent: at the chosen parameterization the
penalty term $A \cdot \mathbf{1}\mathbf{1}^\top$ with $A=4.0$ dominates the linear return
contribution at large $N$.  The resulting coefficient-scale and density burden coincides
with the observed LeapHybridBQM degradation under the tested pipeline, while the Tabu
comparison shows that it does not preclude high-quality BQM solutions. Synthetic instances use
$\boldsymbol{\mu} \sim \mathcal{N}(0.05, 0.02)$ as unitless signal coefficients scaled for
numerical stability; per-variable variance scales are drawn independently from
    $\mathcal{U}(0.02, 0.08)$. Financial metrics (Sharpe ratio and drawdown) are
reported only on the realized Fama--French 49 industry portfolio data in
Section~\ref{sec:a4}, not on the synthetic instances. All covariance matrices are symmetric
positive definite by construction.  The main size sweep uses three
generator seeds per $(N,\text{family})$ cell.  The budget sweep and
repeated-call study instead hold the generated instance fixed at seed~0
to isolate service-budget and service-output variation.  Cardinality is
set to $K = \lfloor 0.3N \rceil$ and risk aversion to $\lambda = 0.5$.
Every reported experiment passes no previous portfolio and sets
$\boldsymbol{\tau}=\mathbf{0}$, so the evaluated objective is the
single-period MV special case of~\eqref{eq:mvt}.

\subsection{Real Equity Data}

We use the Fama--French 49 industry daily return portfolios~\citep{french_data}, constructing
rolling estimation windows of 252 trading days with monthly rebalancing. Five evenly spaced
windows are selected. Industry subsets of $N \in \{10, 20, 30\}$ are formed by selecting the
industries with the largest absolute expected-return magnitude.

\subsection{Solver Configurations}

\paragraph{Direct QPU.} Forward annealing on both Pegasus and Zephyr with 1000 reads, annealing
time $20\,\mu\text{s}$, chain strengths $\{0.5, 1.0, 2.0\}$, and multi-seed embedding search
(5 seeds, 20 tries per seed). All direct-QPU experiments use the default gauge setting
(single-gauge sampling with no spin-reversal transforms); multi-gauge robustness checks are
deferred to future work.  They also use Ocean's default
\texttt{auto\_scale=True}; chain-strength ratios in
Section~\ref{sec:a1} are therefore stated before the common hardware
rescaling. Post-processing projects each sample to exact-$K$ feasibility
using the greedy projection of~\citet{lozano2026penalty}.

\paragraph{Hybrid BQM.} LeapHybridBQMSampler with the penalty-encoded QUBO~\eqref{eq:qubo},
$A = 4.0$. The penalty value is consistent with the calibration study
of~\citet{lozano2026penalty}; we verify robustness to $A \in \{2.0, 4.0, 8.0\}$ in
Appendix~\ref{app:penalty}. Post-processing projects to exact-$K$.

\paragraph{Hybrid CQM.} LeapHybridCQMSampler with the constraint-native
formulation~\eqref{eq:cqm}. Cardinality is enforced natively; the solver returns feasible
solutions directly.

Both hybrid interfaces were called through \texttt{dwave-system} 1.34.0
between 7 and 9 April 2026 (UTC).  As disclosed in
Section~\ref{sec:audit-self-application}, the saved records retain sampler
class and timing fields but not the service's programmatic backend-version
string or per-submission problem ID.

\paragraph{Gurobi MIQP.} The $\boldsymbol{\tau}=\mathbf{0}$ special case
of~\eqref{eq:mvt} is solved as a native mixed-integer quadratic program
using Gurobi 13.0 (Gurobi Optimization, LLC) with a
300-second time limit. Gurobi proves optimality on $62$ of $63$ attempted
instances at $N \leq 200$; one $N{=}200$ run reaches the time limit.  At
$N \geq 400$ the
academic license does not support the model size. Gurobi serves as the gold-standard
reference where optimality is proven.

\paragraph{Simulated annealing.} D-Wave's \texttt{neal} simulated-annealing sampler (1000
reads, 1000 sweeps) applied to the penalty-encoded BQM~\eqref{eq:qubo}. Solutions are projected
to exact-$K$ with the same post-processing as the QPU path.

\subsection{Benchmark Axes}

\begin{enumerate}
    \item \textbf{Problem size $N$}: $\{10, 20, 30, 50, 80\}$ for direct QPU;
    $\{10, 20, 30, 50, 80, 120, 200, 400, 640\}$ for hybrid and classical solvers.
    \item \textbf{Covariance density}: diagonal, block, dense.
    \item \textbf{Wall-clock budget}: $\{5, 30, 60, 180, 300\}$ seconds for hybrid budget
    sweep (Section~\ref{sec:b2}).
\end{enumerate}

% ============================================================================
\section{Results}
\label{sec:results}
% ============================================================================

\subsection{Direct QPU: Chain-Break Fractions and Embedding Overhead}
\label{sec:a1}

Table~\ref{tab:a1_cbf} reports the mean chain-break fraction and embedding overhead for direct
QPU sampling of the penalty-encoded QUBO, and Figure~\ref{fig:chainbreak} visualizes these
trends split by hardware topology (Pegasus vs Zephyr). Chain-break fractions increase
monotonically with $N$, reaching $0.937 \pm 0.016$ at $N = 80$. The embedding overhead grows
from $1.6\times$ at $N = 10$ to $9.4\times$ at $N = 80$. These trends are consistent with the
cross-generation chain-break and embedding-overhead measurements reported by~\citet{pelofske2025}
for combinatorial benchmark problems on Pegasus and Zephyr.

\begin{table}[ht]
\centering
\caption{Direct QPU embedding statistics (mean $\pm$ std pooled over all families, seeds,
chain strengths, \emph{and both hardware topologies}; 270 runs, all embedded successfully).
Per-topology splits appear in Figure~\ref{fig:chainbreak}; the $N{=}80$ per-topology
physical-qubit means are $769$ (Pegasus) and $731$ (Zephyr).  Chain-break fractions and embedding
overheads are nearly identical across density families at each $N$, confirming the density-axis
collapse of Section~\ref{sec:density_collapse}.}
\label{tab:a1_cbf}
\begin{tabular}{rrrrr}
\toprule
$N$ & $K$ & Chain-break frac. & Phys.\ qubits & Overhead \\
\midrule
10 & 3 & $0.543 \pm 0.058$ & 16 & $1.6\times$ \\
20 & 6 & $0.756 \pm 0.034$ & 51 & $2.5\times$ \\
30 & 9 & $0.820 \pm 0.019$ & 107 & $3.6\times$ \\
50 & 15 & $0.894 \pm 0.023$ & 298 & $6.0\times$ \\
80 & 24 & $0.937 \pm 0.016$ & 750 & $9.4\times$ \\
\bottomrule
\end{tabular}
\end{table}

\begin{figure}[ht]
\centering
\includegraphics[width=\textwidth]{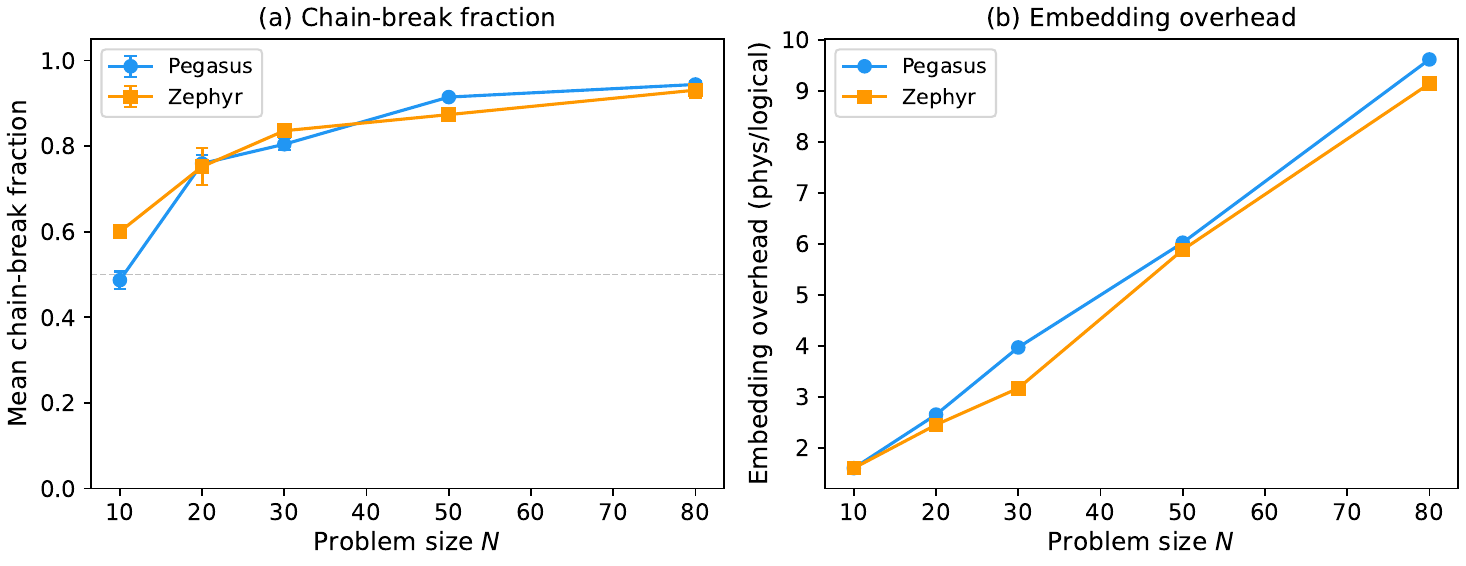}
\caption{(a) Mean chain-break fraction and (b) embedding overhead (physical / logical qubits)
vs problem size for direct QPU on penalty-encoded QUBOs, split by hardware topology.}
\label{fig:chainbreak}
\end{figure}

At mean chain-break fractions above 0.9, more than $90\%$ of logical chains
are broken per returned sample on average.  The decoded outputs therefore
depend heavily on chain unembedding and exact-$K$ projection.  We read this
subsection as characterizing embedding feasibility and
projector behaviour under the tested protocol (single-gauge sampling, a single un-swept
$20\,\mu$s anneal), not as a characterization of annealing dynamics.  The embedding-overhead
growth is a direct consequence of the $A \cdot \mathbf{1}\mathbf{1}^\top$ penalty term,
which forces the complete-graph embedding regardless of the original $\Sigma$ density.
Figure~\ref{fig:density_data} confirms this empirically: chain-break fractions are nearly
identical across diagonal, block, and dense families at each $N$, on both Pegasus and Zephyr.

\begin{figure}[t]
    \centering
    \includegraphics[width=\textwidth]{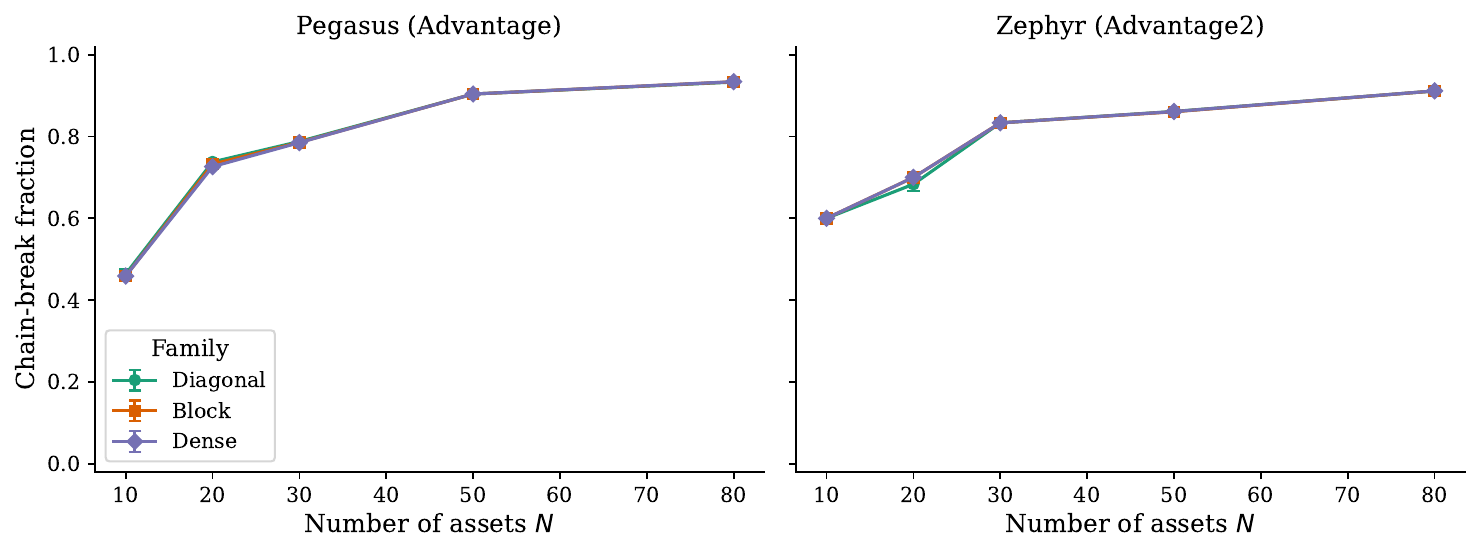}
    \caption{Chain-break fraction vs $N$ split by covariance-density family (diagonal, block,
    dense), on Pegasus (left) and Zephyr (right). The three curves nearly overlap at each $N$,
    confirming that penalty encoding collapses the density axis.}
    \label{fig:density_data}
\end{figure}

\paragraph{Per-chain-strength refinement.}
Table~\ref{tab:a1_cbf} averages over chain strengths
$J_c \in \{0.5, 1.0, 2.0\}$.  Under the symmetric-QUBO convention used
throughout ($H = z^\top Q z$ with $Q$ symmetric), the cardinality
penalty contributes an upper-triangular QUBO coefficient $2A=8$ per
unordered pair.  Ocean's embedding composite converts the BQM to SPIN
before adding chains, where that penalty contribution maps to the pairwise
Ising coupling $A/2=2$.  The tested strengths
$J_c\in\{0.5,1.0,2.0\}$ therefore span $25\%$--$100\%$ of the
penalty-only pairwise Ising scale before the common hardware rescaling.
That ratio alone does not establish adequate chain strength on a submitted
$K_N$ graph of degree $N-1$, and the experiment does not include Ocean's
torque-compensated default or a larger scale-aware sweep.
Table~\ref{tab:cbf_per_strength} splits the same data by chain
strength.  Within the tested regime: at the strongest tested
$J_c = 2.0$, CBF is already $\approx 0.81$ at $N = 30$, exceeds
$0.88$ by $N = 50$, and reaches $0.92$ at $N = 80$, and the CBF
decrease from $J_c = 0.5$ to $2.0$ is monotone but small
($\sim 0.03$ at $N = 80$).  Because the sweep omits torque-compensated
and larger scale-aware values, these data characterize the specified grid
only and cannot separate a structural chain-break floor from chain-strength
tuning; a broader scale-aware sweep is future work.  The structural
claim we retain is Corollary~\ref{cor:embedding_lower_bound}'s: the
embedding-overhead growth $\Omega(N^2/d)$, and with it the
chain-length growth of Lemma~\ref{lem:embedding_growth}, is
independent of chain strength.

\begin{table}[!ht]
\centering
\caption{Direct-QPU chain-break fraction split by chain strength $J_c$
(mean $\pm$ sample std over density families, seeds, and topologies).
The penalty contributes an upper-triangular QUBO coefficient $2A=8$
per pair; Ocean converts the BQM to SPIN before embedding, where the
penalty contribution maps to an Ising coupling $A/2=2$.  Thus
$J_c \in \{0.5,1.0,2.0\}$ spans $25\%$--$100\%$ of that pairwise
scale.  This ratio alone does not establish adequate chain strength on
the complete graph, and no torque-compensated or larger scale-aware
sweep was run; the table is a diagnostic split, not a tuned bound.
This per-strength split refines the chain-strength-averaged summary in
Table~\ref{tab:a1_cbf}.}
\label{tab:cbf_per_strength}
\small
\begin{tabular}{rrrrr}
\toprule
$N$ & $J_c$ & runs & CBF mean $\pm$ std & Phys.\ overhead \\
\midrule
10 & $0.5$ & 18 & $0.550 \pm 0.051$ & $1.6\times$ \\
10 & $1.0$ & 18 & $0.550 \pm 0.051$ & $1.6\times$ \\
10 & $2.0$ & 18 & $0.530 \pm 0.073$ & $1.6\times$ \\
\addlinespace
20 & $0.5$ & 18 & $0.785 \pm 0.014$ & $2.5\times$ \\
20 & $1.0$ & 18 & $0.768 \pm 0.006$ & $2.5\times$ \\
20 & $2.0$ & 18 & $0.713 \pm 0.023$ & $2.5\times$ \\
\addlinespace
30 & $0.5$ & 18 & $0.828 \pm 0.012$ & $3.6\times$ \\
30 & $1.0$ & 18 & $0.822 \pm 0.013$ & $3.6\times$ \\
30 & $2.0$ & 18 & $0.810 \pm 0.024$ & $3.6\times$ \\
\addlinespace
50 & $0.5$ & 18 & $0.905 \pm 0.018$ & $6.0\times$ \\
50 & $1.0$ & 18 & $0.895 \pm 0.024$ & $6.0\times$ \\
50 & $2.0$ & 18 & $0.882 \pm 0.022$ & $6.0\times$ \\
\addlinespace
80 & $0.5$ & 18 & $0.955 \pm 0.002$ & $9.4\times$ \\
80 & $1.0$ & 18 & $0.934 \pm 0.011$ & $9.4\times$ \\
80 & $2.0$ & 18 & $0.922 \pm 0.011$ & $9.4\times$ \\
\bottomrule
\end{tabular}
\end{table}

\FloatBarrier
\subsection{Hybrid CQM vs BQM: Formulation Comparison}
\label{sec:a2a3}

Table~\ref{tab:a2a3} compares the mean objective values achieved by hybrid BQM and CQM across
problem sizes. All values are evaluated on the original constrained
objective~\eqref{eq:mvt} with $\boldsymbol{\tau}=\mathbf{0}$: BQM
solutions are projected to exact-$K$ before evaluation.

\begin{table}[ht]
\centering
\caption{Hybrid CQM vs BQM: mean objective value by $N$ (averaged over 3 density families
$\times$ 3 seeds = 9 instances per $N$; 162 runs total). Lower is better. ``CQM strict wins''
counts instances with $f_{\mathrm{CQM}} < f_{\mathrm{BQM}} - \varepsilon$,
$\varepsilon = 10^{-6}$; ``ties'' counts $|f_{\mathrm{CQM}} - f_{\mathrm{BQM}}| \leq \varepsilon$.
A paired one-sided Wilcoxon signed-rank test on per-instance objectives rejects
$H_0: f_{\mathrm{CQM}} \geq f_{\mathrm{BQM}}$ at every $N \geq 20$ after
Holm correction across the eight non-degenerate size-wise tests
($p_{\mathrm{Holm}} = 0.031$ at $N = 20$ and $0.016$ at $N \geq 30$);
at $N{=}10$ all nine pairs are ties.}
\label{tab:a2a3}
\begin{tabular}{rrrrr}
\toprule
$N$ & BQM objective & CQM objective & CQM strict wins & Ties \\
\midrule
10 & $-0.139$ & $-0.139$ & 0/9 & 9 \\
20 & $-0.310$ & $-0.315$ & 5/9 & 4 \\
30 & $-0.417$ & $-0.455$ & 9/9 & 0 \\
50 & $-0.540$ & $-0.654$ & 9/9 & 0 \\
80 & $-0.622$ & $-0.955$ & 9/9 & 0 \\
120 & $-0.628$ & $-1.247$ & 9/9 & 0 \\
200 & $+0.024$ & $-1.525$ & 9/9 & 0 \\
400 & $+4.480$ & $+0.040$ & 9/9 & 0 \\
640 & $+16.452$ & $+6.229$ & 9/9 & 0 \\
\bottomrule
\end{tabular}
\end{table}

At $N \geq 200$, the mean hybrid-BQM objective becomes positive.  This
deterioration is consistent with the encoded penalty scale and complete-graph
support burdening the hybrid-BQM search, but the opaque telemetry does not
reveal how its budget is allocated internally. Under the exact-$K$ cardinality constraint, a positive objective value indicates a
portfolio whose risk-adjusted cost exceeds its expected return. The CQM path,
which handles cardinality natively, does not exhibit this failure mode under
the tested service settings.

Figure~\ref{fig:objective_gap} visualizes the solver comparison including classical baselines.

\begin{figure}[ht]
\centering
\includegraphics[width=0.85\textwidth]{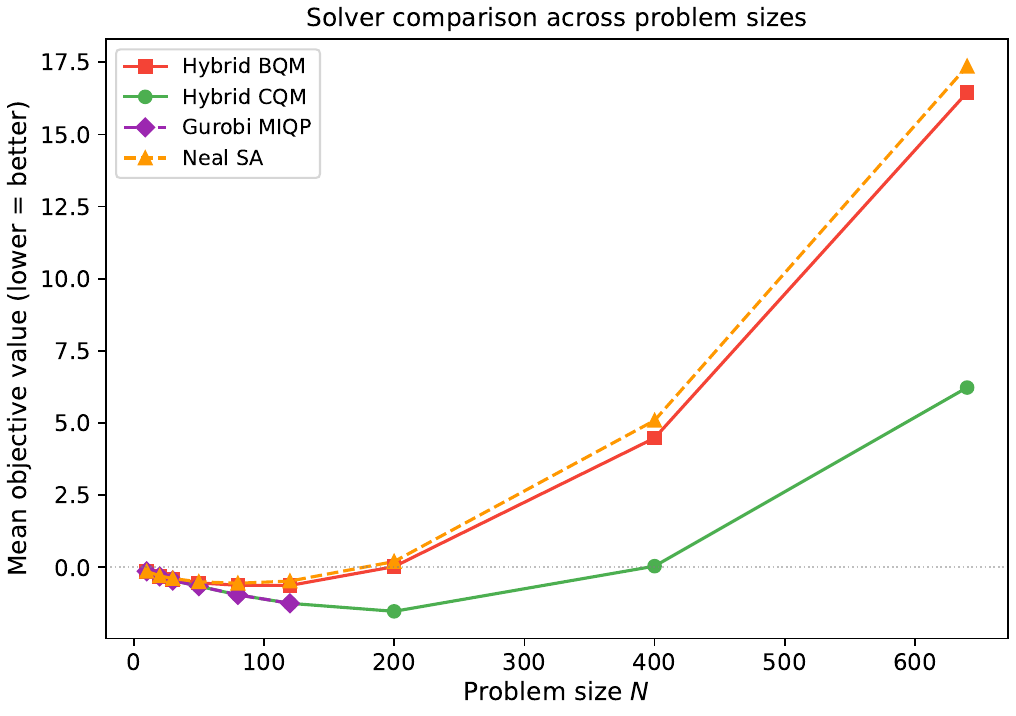}
\caption{Mean objective value vs problem size across solver families (averaged over 3 density
families $\times$ 3 seeds). Gurobi MIQP (dashed purple) is the classical
reference, with optimality proved for $62/63$ attempted instances at
$N\leq200$; the single unproved $N=200$ run is not used in the gap table.
Hybrid CQM (green) tracks Gurobi closely. Hybrid BQM (red) and neal SA (orange) both operate
on the penalty-encoded QUBO and diverge at large $N$.}
\label{fig:objective_gap}
\end{figure}

\FloatBarrier
\subsection{Gap to Gurobi Optimal}
\label{sec:gap}

Table~\ref{tab:gap} and Figure~\ref{fig:gap} report the regularized
gap $g_F$ of~\eqref{eq:gF} to Gurobi's proven optimum. CQM matches the
proven optimum exactly (gap $= 0$) on all $54$ head-to-head
instances at $N \leq 120$ (6 sizes $\times$ 9 instances per $N$, all of which Gurobi
proves to optimality). BQM and SA, both operating on the penalty-encoded formulation,
show monotonically increasing gaps.

\begin{table}[ht]
\centering
\caption{Regularized gap $g_F$ to Gurobi optimal (mean $\pm$ sample
std, $\mathrm{ddof}{=}1$, over $9$ instances per $N$), using the
$\max(1,|f^\star|)$ denominator of~\eqref{eq:gF}. CQM achieves zero
gap at every reported $N$. BQM and SA gaps grow with $N$ under the
tested settings.}
\label{tab:gap}
\begin{tabular}{rrrr}
\toprule
$N$ & CQM gap & BQM gap & SA gap \\
\midrule
10 & $0.000$ & $0.000$ & $0.000$ \\
20 & $0.000$ & $0.005 \pm 0.007$ & $0.019 \pm 0.012$ \\
30 & $0.000$ & $0.038 \pm 0.013$ & $0.066 \pm 0.014$ \\
50 & $0.000$ & $0.115 \pm 0.018$ & $0.154 \pm 0.030$ \\
80 & $0.000$ & $0.288 \pm 0.074$ & $0.350 \pm 0.106$ \\
120 & $0.000$ & $0.455 \pm 0.278$ & $0.571 \pm 0.355$ \\
\bottomrule
\end{tabular}
\end{table}

\begin{figure}[ht]
\centering
\includegraphics[width=0.75\textwidth]{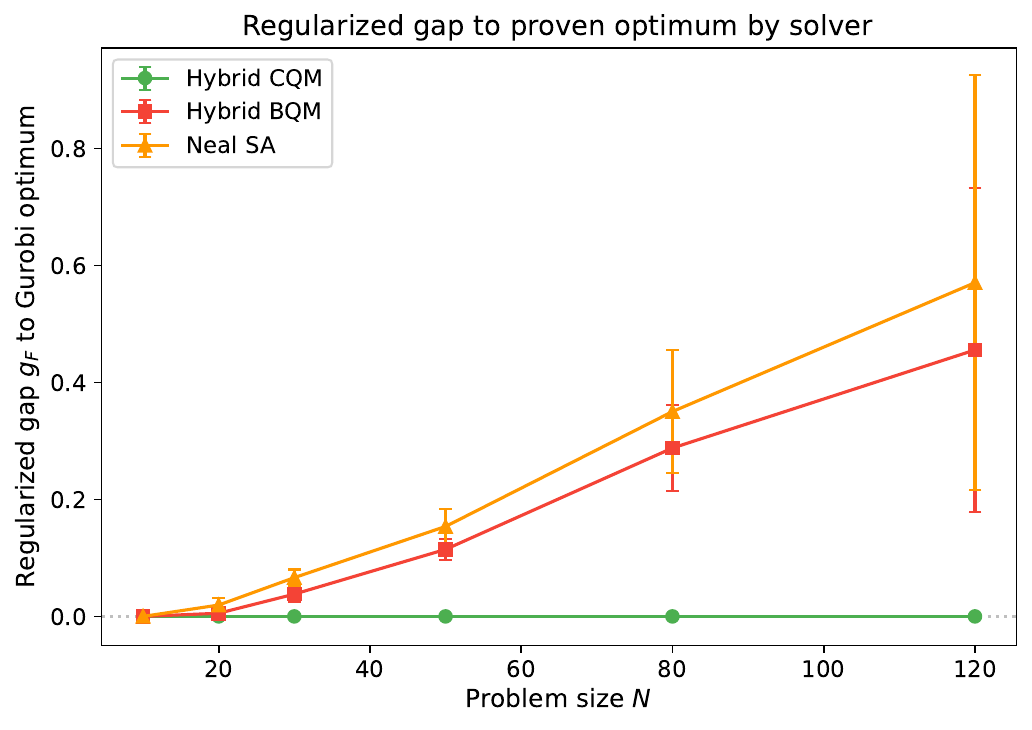}
\caption{Regularized gap $g_F$ to Gurobi optimal vs $N$. CQM (green)
is indistinguishable from the optimal baseline; BQM (red) and SA
(orange) diverge under the tested penalty-encoded settings.}
\label{fig:gap}
\end{figure}

\FloatBarrier
\subsection{Classical Baselines: Gurobi MIQP and Simulated Annealing}
\label{sec:classical}

Gurobi MIQP proves optimality for $62$ of $63$ attempted instances at
$N \leq 200$; one $N = 200$ run reaches the time limit. At $N \geq 400$,
the free academic license does not
support the model size. The neal simulated-annealing heuristic, which operates on the same
penalty-encoded BQM as the hybrid BQM solver, exhibits a similar degradation pattern: objective
quality deteriorates at large $N$ under its tested settings.  This shows that
penalty encoding can burden more than one search procedure, but the Tabu
comparison in Section~\ref{sec:qpu-replacement-ablation} also shows that the
observed hybrid-BQM degradation is not an unavoidable consequence of the
formulation alone.

\FloatBarrier
\subsection{Full Hybrid Timing Breakdown}
\label{sec:timing-breakdown}

Table~\ref{tab:timing_breakdown} reports the three timing fields the D-Wave
hybrid SDK exposes per submission ($t_{\mathrm{run}}$,
$t_{\mathrm{charge}}$, and $t_{\mathrm{QPU}}$) as means over the 9
instances per $(N, \mathrm{solver})$ cell of the synthetic-hybrid
campaign~(162 runs total). The QPU wall-clock fraction
$r_{\mathrm{QPU}} = t_{\mathrm{QPU}}/t_{\mathrm{run}}$
(Section~\ref{sec:audit-metrics}) is included as the rightmost column.

Three observations:
\begin{itemize}
  \item For LeapHybridCQM, $t_{\mathrm{QPU}}$ is approximately constant in
  $N$ ($\sim 34$\,ms across all sizes from $N{=}10$ to $N{=}640$), while
  $t_{\mathrm{run}}$ and $t_{\mathrm{charge}}$ stay close to the
  service's $5$\,s minimum. The result is
  $r_{\mathrm{QPU}} \in [0.67\%, 0.77\%]$ across the entire $N$ range, with
  no monotonic trend; this is the $\approx 0.7\%$ headline number.
  \item For LeapHybridBQM, $t_{\mathrm{QPU}}$ is larger
  ($\sim 147$--$187$\,ms) and varies more across $N$, peaking near $N=200$;
  $r_{\mathrm{QPU}}$ is in the range $[2.95\%, 3.75\%]$, roughly $4\times$
  the CQM value.
  \item In both solvers, $t_{\mathrm{run}} \approx t_{\mathrm{charge}}$
  (within $\sim 3\%$), so the choice of denominator does not materially
  change the reported QPU fraction.
\end{itemize}

The fraction of runs with recorded QPU access (\texttt{qpu\_access\_time} $> 0$) is
$80/81$ for hybrid BQM and $79/81$ for hybrid CQM; in the budget sweep
(Section~\ref{sec:b2}), $74/78$ runs across both solvers record QPU access.

The constancy of $t_{\mathrm{QPU}}$ in $N$ on the CQM path is compatible
with decomposition into bounded-size QPU subproblems, but the unlabelled
residual does not identify where the scale-up cost is paid. The interpretive limits of
Section~\ref{sec:audit-limits} apply: $r_{\mathrm{QPU}}$ bounds the
\emph{wall-clock} share, not the causal contribution.
Sections~\ref{sec:dwell-quality} and
\ref{sec:qpu-replacement-ablation} probe the causal question
with two complementary non-causal diagnostics: dwell--quality
association and matched-budget classical reproducibility.

\begin{table}[ht]
\centering
\caption{Full timing breakdown per $(N,\,\text{solver})$ cell on the synthetic-hybrid runs (mean over 9 instances per cell: 3 density families $\times$ 3 seeds; 162 runs total).  All timing fields in seconds.  $r_{\mathrm{QPU}}$ is shown as the within-cell mean [IQR] for $t_{\mathrm{QPU}}/t_{\mathrm{run}}$.}
\label{tab:timing_breakdown}
\begin{tabular}{rlrrrr}
\toprule
$N$ & Solver & $t_{\mathrm{run}}$ (s) & $t_{\mathrm{charge}}$ (s) & $t_{\mathrm{QPU}}$ (s) & $r_{\mathrm{QPU}}$ mean [IQR] (\%) \\
\midrule
10 & hybrid-bqm & $4.987$ & $4.987$ & $0.1469$ & $2.95\,[0.00]$ \\
20 & hybrid-bqm & $4.987$ & $4.987$ & $0.1516$ & $3.04\,[0.01]$ \\
30 & hybrid-bqm & $4.991$ & $4.991$ & $0.1564$ & $3.13\,[0.01]$ \\
50 & hybrid-bqm & $4.989$ & $4.989$ & $0.1626$ & $3.26\,[0.01]$ \\
80 & hybrid-bqm & $4.994$ & $4.994$ & $0.1855$ & $3.71\,[0.01]$ \\
120 & hybrid-bqm & $4.996$ & $4.996$ & $0.1671$ & $3.34\,[0.97]$ \\
200 & hybrid-bqm & $4.991$ & $4.991$ & $0.1873$ & $3.75\,[1.03]$ \\
400 & hybrid-bqm & $4.994$ & $4.994$ & $0.1553$ & $3.11\,[0.01]$ \\
640 & hybrid-bqm & $4.987$ & $4.987$ & $0.1556$ & $3.12\,[0.00]$ \\
10 & hybrid-cqm & $4.613$ & $4.569$ & $0.0347$ & $0.76\,[0.01]$ \\
20 & hybrid-cqm & $4.514$ & $4.488$ & $0.0347$ & $0.77\,[0.00]$ \\
30 & hybrid-cqm & $4.842$ & $4.745$ & $0.0347$ & $0.72\,[0.11]$ \\
50 & hybrid-cqm & $4.582$ & $4.564$ & $0.0309$ & $0.68\,[0.01]$ \\
80 & hybrid-cqm & $4.683$ & $4.641$ & $0.0309$ & $0.67\,[0.11]$ \\
120 & hybrid-cqm & $4.506$ & $4.488$ & $0.0347$ & $0.77\,[0.00]$ \\
200 & hybrid-cqm & $4.927$ & $4.813$ & $0.0347$ & $0.71\,[0.11]$ \\
400 & hybrid-cqm & $4.937$ & $4.822$ & $0.0347$ & $0.71\,[0.09]$ \\
640 & hybrid-cqm & $4.973$ & $4.826$ & $0.0348$ & $0.70\,[0.10]$ \\
\bottomrule
\end{tabular}
\end{table}

\FloatBarrier
\subsection{Direct QPU on Real Equity Data}
\label{sec:a4}

We confirm the synthetic findings on Fama--French 49 industry data as a directional sanity
check (90 runs). Chain-break fractions follow the same pattern as the synthetic instances:
$0.483$ at $N = 10$, $0.759$ at $N = 20$, and $0.802$ at $N = 30$ (Pegasus means). All 90
embeddings succeeded; the constraint is solution quality, not embedding feasibility.
Figure~\ref{fig:equity_sanity} shows the chain-break trend on real data.

\begin{figure}[t]
\centering
\includegraphics[width=0.75\textwidth]{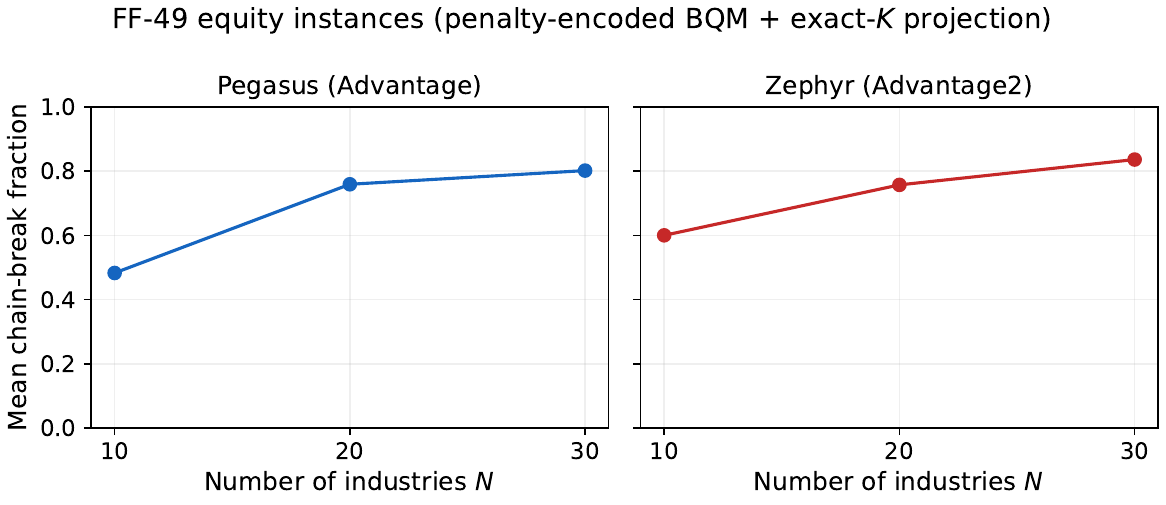}
\caption{Direct QPU on real Fama--French 49 equity data: chain-break fraction vs $N$ on Pegasus
and Zephyr, confirming the synthetic-data pattern on real financial covariance matrices.}
\label{fig:equity_sanity}
\end{figure}

\FloatBarrier
\subsection{Budget Sweep: Does More Time Help?}
\label{sec:b2}

Table~\ref{tab:b2} reports hybrid objective values across wall-clock budgets from 5 to 300
seconds, and Figure~\ref{fig:budget} visualizes the corresponding budget response curves. The
budget sweep uses $N \in \{50, 100, 200, 400\}$ (differing from the main $N$-sweep grid to
include mid-range sizes relevant for budget sensitivity).

\begin{table}[ht]
\centering
\caption{Hybrid objective value by wall-clock budget on the fixed
generator-seed-0 instance in each cell (dense family; block family shows the same
pattern, with two hybrid BQM block-family cells at $N{=}400$ at $180\,$s and $300\,$s not
completed within the campaign's wall-clock budget; the corresponding CQM cells did
complete). On the tested instances, CQM returns exactly identical recorded objective values at all
budgets; BQM often attains lower objectives at longer budgets but is
non-monotonic and does not reach the CQM objective at any budget.}
\label{tab:b2}
\begin{tabular}{rlrrrrr}
\toprule
$N$ & Solver & 5\,s & 30\,s & 60\,s & 180\,s & 300\,s \\
\midrule
50 & BQM & $-0.633$ & $-0.669$ & $-0.709$ & $-0.696$ & $-0.734$ \\
50 & CQM & $-0.799$ & $-0.799$ & $-0.799$ & $-0.799$ & $-0.799$ \\
\addlinespace
100 & BQM & $-1.309$ & $-1.367$ & $-1.593$ & $-1.792$ & $-1.695$ \\
100 & CQM & $-1.839$ & $-1.839$ & $-1.839$ & $-1.839$ & $-1.839$ \\
\addlinespace
200 & BQM & $-2.420$ & $-2.460$ & $-2.423$ & $-2.446$ & $-2.480$ \\
200 & CQM & $-3.598$ & $-3.598$ & $-3.598$ & $-3.598$ & $-3.598$ \\
\addlinespace
400 & BQM & $-4.426$ & $-4.921$ & $-5.080$ & $-5.740$ & $-5.899$ \\
400 & CQM & $-7.454$ & $-7.454$ & $-7.454$ & $-7.454$ & $-7.454$ \\
\bottomrule
\end{tabular}
\end{table}

\begin{figure}[ht]
\centering
\includegraphics[width=\textwidth]{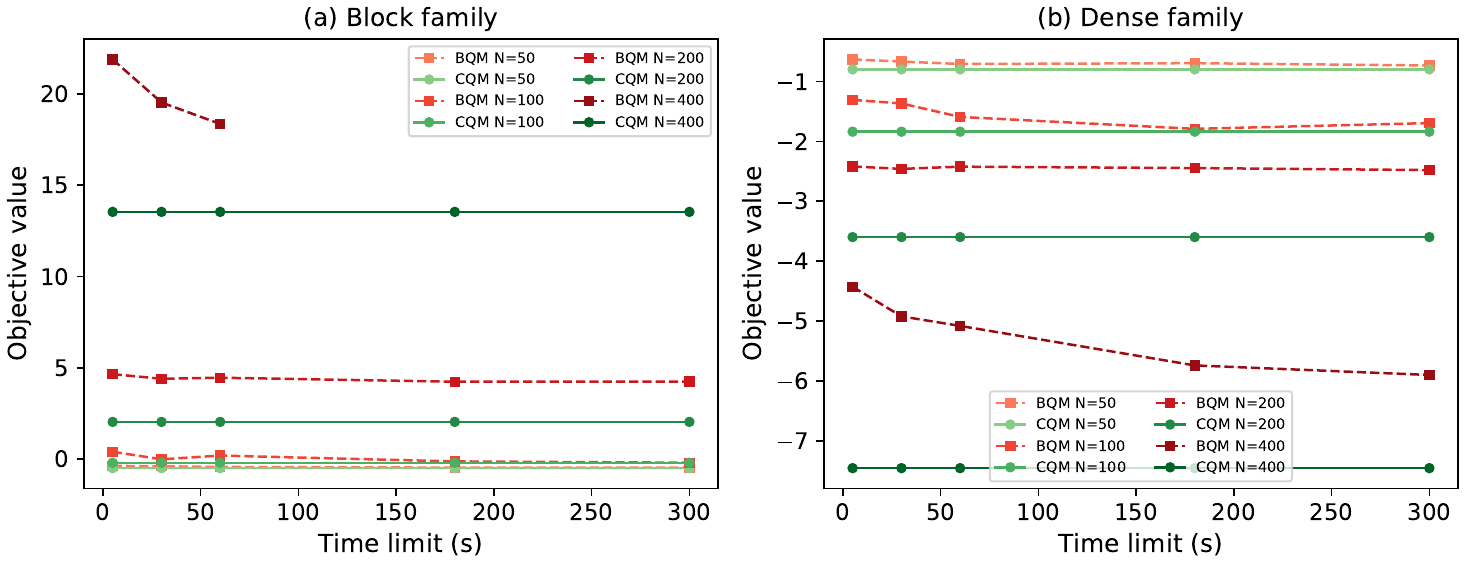}
\caption{Budget response curves for hybrid BQM (dashed) and CQM (solid) across problem sizes.
CQM lines are flat; BQM traces are non-monotonic but tend downward with longer budgets and
remain above CQM at all tested budgets.}
\label{fig:budget}
\end{figure}

\paragraph{Stochastic validation.} To verify that the budget-saturation finding is not an
artifact of single-shot runs, we repeat the hybrid CQM and BQM solves on
the same fixed generator-seed-0 instances, with 10 calls planned for each solver at
$N \in \{50, 100, 200\}$, both density families, and budgets $\{5, 300\}$\,s (228 runs total;
the $(N{=}50, \text{dense}, 300\,\text{s})$ cell completed four
repetitions per solver within the campaign's wall-clock budget; the
remaining cells reached the planned ten). Hybrid CQM
produced exactly identical recorded objective values
across all available repetitions at every tested $(N, \rho, B)$ combination, at both
$5\,$s and $300\,$s, verified at the cell level against the saved JSONL records. BQM
shows measurable stochastic variance (std $0.01$--$0.12$) and improves with budget, but its
best run at $300\,$s never reaches the CQM objective at $5\,$s on any tested instance.
The formal $v_{\mathrm{service}}$ metric is reported only for the $11$ CQM
cells meeting its $n\geq10$ requirement; identity in the four-call cell is
descriptive rather than protocol-compliant.
Figure~\ref{fig:repeat_saturation} visualizes this contrast.

\begin{figure}[!ht]
\centering
\includegraphics[width=0.95\textwidth]{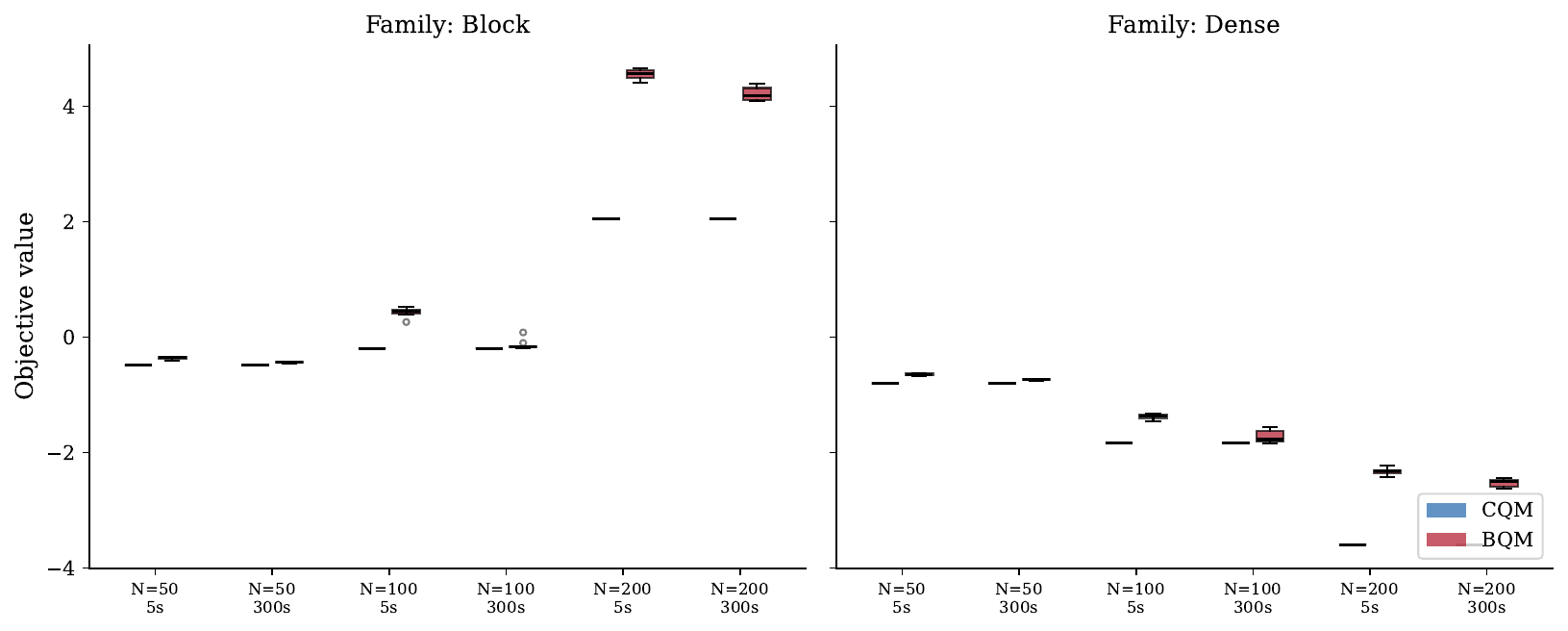}
\caption{Stochastic validation: box plots of objective values across up to 10 repeated calls for CQM
(blue, flat lines = zero variance) and BQM (red, visible spread) at budgets of 5\,s and
300\,s, split by density family.  Each solver has four calls in the
$(N{=}50,\text{dense},300\,\text{s})$ cell and ten elsewhere.}
\label{fig:repeat_saturation}
\end{figure}

\FloatBarrier
\subsection{QPU-Dwell vs Solution-Quality Correlation}
\label{sec:dwell-quality}

The QPU wall-clock fraction $r_{\mathrm{QPU}}$ bounds the time share
attributable to the quantum processor but cannot, by itself, bound the
QPU's causal contribution to solution quality.  We probe the causal
question with two complementary diagnostics, neither of which is itself
a causal ablation: within-cell association between
QPU dwell time and final objective (this subsection); and a CPU-only
matched-budget classical-reproducibility comparison
(Section~\ref{sec:qpu-replacement-ablation}).

\paragraph{Within-cell analysis.}
Within each $(N, \rho, B)$ cell, repeated hybrid CQM calls produce
exactly identical recorded objective values
(Section~\ref{sec:b2}), so within-cell correlation analysis is only
defined on hybrid BQM cells where service-internal randomness produces
measurable objective variance.  On $11$ such BQM cells (extracted from
the repeated-hybrid campaign at $N \in \{50, 100, 200\}$,
$\{$block, dense$\}$, $\{5, 300\}$\,s), we compute Spearman $\rho$
between \texttt{hybrid\_qpu\_access\_time} and
\texttt{objective\_value}, with $95\%$ bootstrap confidence intervals
(2{,}000 resamples).

The result is null after multiplicity correction.  Nine of $11$ cells
have percentile bootstrap CIs that straddle zero.  The two uncorrected
intervals excluding zero are $(N=100,\text{block},5\,\mathrm{s})$ with
$\rho=-0.78$ (raw $p=0.0075$, Holm $p=0.0755$) and
$(N=200,\text{dense},5\,\mathrm{s})$ with $\rho=+0.81$ (raw
$p=0.0049$, Holm $p=0.0535$); neither survives Holm correction across
the $11$ tests.  Their directions are also opposite.  The median
$|\rho|$ is $0.25$.  At the within-cell level, the repeated hybrid-BQM
records therefore show no multiplicity-controlled QPU-dwell--quality
association.

\paragraph{Cross-cell analysis at fixed $(\text{solver}, N)$.}
We also compute Spearman $\rho$ across the different $(\rho, B)$ cells
at fixed $(\text{solver}, N)$ in the budget sweep, where 10 distinct
cells per $(\text{solver}, N)$ vary in family and budget.  Of $8$
cross-cell buckets analysed, $2$ have negative associations that survive
Holm correction (BQM $N=400$: $\rho=-0.88$, adjusted $p=0.027$; CQM
$N=50$: $\rho=-0.93$, adjusted $p<0.001$), $0$ survive in the positive
direction, and $6$ do not reject zero.  The two negative cases may reflect the classical
decomposer allocating more QPU time to harder sub-problems within the
same $N$ (a hardness-correlated allocation pattern), but the
exposed telemetry cannot distinguish this allocation confounding from a
genuine contribution of the QPU samples; we report both cells rather
than explain them away.

Taken together, the dwell-vs-quality analysis finds no consistent
association and supports no causal inference on its own.
Section~\ref{sec:qpu-replacement-ablation} provides the complementary
matched-budget classical comparison.

\FloatBarrier
\subsection{Matched-Budget Classical Reproducibility Comparison}
\label{sec:qpu-replacement-ablation}

To test whether the objective levels reached by the hybrid service are
classically attainable at the same nominal budget, we performed a
CPU-only comparison: on $36$ problem instances
($N \in \{50, 80, 120, 200, 400, 640\}$, $\{$block, dense$\}$, three
instance seeds each), we ran three fixed algorithm seeds of
\texttt{TabuSampler} 1.7.0 from D-Wave's classical-samplers
package on the same penalty-encoded BQM that LeapHybridBQM consumes,
at a matched wall-clock budget of $5$\,s per instance (matching the
hybrid service's $5$\,s minimum, which is also the budget used by the
saved $N \in \{400, 640\}$ hybrid runs).  This gives $108$ Tabu runs;
each run uses one read and an independently seeded five-second search.
The runs execute serially on an Apple M3 Max CPU (16 cores, 128~GB RAM),
and every observed wall-clock is retained (mean $5.0025$\,s, range
$5.0005$--$5.0084$\,s across the $108$ runs).  Because timeout-based termination
can introduce minor rerun differences even at a fixed pseudorandom seed, we
treat the three algorithm seeds as stochastic replicates rather than
bitwise-reproducible fixtures.  The comparison grid starts at
$N = 50$ because hybrid CQM and BQM tie at $N = 10$ on every seed
(Table~\ref{tab:a2a3}) and the head-to-head gap is small through
$N = 30$, leaving little signal at the smaller sizes.
Results were projected to exact-$K$ feasibility using the same greedy
projector~\citep{lozano2026penalty} as the direct-QPU path, although all
$108$ Tabu outputs were already exact-$K$ feasible; the final
single-period mean-variance objective was then evaluated.

Table~\ref{tab:ablation} reports the resulting Tabu objectives
alongside the saved hybrid CQM, hybrid BQM, and Gurobi-optimal
objectives for the same $(N, \text{family}, \text{seed})$ instances.

\begin{table}[ht]
\centering
\caption{Matched-budget classical-reproducibility comparison:
\texttt{TabuSampler} at matched
$5$\,s wall-clock on the penalty-encoded BQM, compared to the saved
hybrid and Gurobi objectives on the same $(N, \text{family}, \text{seed})$
instances ($36$ problem instances, $108$ Tabu runs total). Tabu entries
are means over $3$ instance seeds $\times$ $3$ algorithm seeds per
$(N, \text{family})$ cell; lower is
better. The rightmost column is the Tabu--CQM delta (positive means
Tabu worse than hybrid CQM).  Gurobi anchors exist only at
$N \leq 200$ (academic-license model-size limit).}
\label{tab:ablation}
\begin{tabular}{rlrrrrr}
\toprule
$N$ & Family & Tabu & Hybrid CQM & Hybrid BQM & Gurobi$^\star$ & Tabu $-$ CQM \\
\midrule
 50 & block & $-0.4729$ & $-0.4746$ & $-0.3555$ & $-0.4746$ & $+0.0017$ \\
 50 & dense & $-0.7986$ & $-0.7986$ & $-0.6736$ & $-0.7986$ & $+0.0000$ \\
 80 & block & $-0.3088$ & $-0.3136$ & $+0.0691$ & $-0.3136$ & $+0.0049$ \\
 80 & dense & $-1.3721$ & $-1.3721$ & $-1.0242$ & $-1.3721$ & $+0.0000$ \\
120 & block & $+0.0093$ & $+0.0075$ & $+0.8249$ & $+0.0075$ & $+0.0018$ \\
120 & dense & $-2.0716$ & $-2.0716$ & $-1.4535$ & $-2.0716$ & $+0.0000$ \\
200 & block & $+1.9404$ & $+1.9395$ & $+4.3947$ & $+1.9395$ & $+0.0009$ \\
200 & dense & $-3.5833$ & $-3.5833$ & $-2.3086$ & $-3.6225^{\dagger}$ & $+0.0000$ \\
\midrule
400 & block & $+13.2036$ & $+13.2035$ & $+21.6853$ & --- & $+0.0001$ \\
400 & dense & $-7.2523$ & $-7.2523$ & $-4.4528$ & --- & $+0.0000$ \\
640 & block & $+39.4742$ & $+39.4742$ & $+61.0839$ & --- & $+0.0000$ \\
640 & dense & $-11.2615$ & $-11.2615$ & $-6.1166$ & --- & $+0.0000$ \\
\bottomrule
\end{tabular}
\\[2pt]
\footnotesize $^\star$ Gurobi proves optimality on all reported cells
except $(N=200, \text{dense}, \text{seed}=1)$, marked $\dagger$, which
hit \texttt{status\_9} in the academic-license run.
\end{table}

Aggregated across the $72$ seeded Tabu runs on the $24$ instances at
$N \leq 200$:
\begin{itemize}
  \item Mean absolute Tabu--CQM delta $=0.00116$; maximum $=0.00798$.
  $53/72$ runs, spanning $17/24$ instances for all three algorithm seeds,
  are within $0.0005$.
  \item Against the $23$ Gurobi-proved instances, mean absolute delta
  $=0.00121$ over $69$ seeded runs; maximum $=0.00798$, with $50/69$
  runs within $0.0005$.
  \item Mean(Tabu $-$ Hybrid BQM) $= -0.76635$.  Tabu substantially outperforms LeapHybridBQM on the same penalty-encoded inputs, so the hybrid-BQM degradation reflects the interaction of the BQM encoding with the LeapHybridBQM pipeline, not quantum sampling alone.
\end{itemize}

At $N \in \{400, 640\}$, the scales with the largest CQM-over-BQM
margins and no Gurobi anchor, the result sharpens: over $36$ seeded
runs on $12$ instances, mean absolute Tabu--CQM delta is
$3.04\times10^{-5}$ and the maximum is $0.000989$; $35/36$ runs are
within $0.0005$, and every Tabu solution is already exact-$K$ feasible
before projection. Tabu outperforms hybrid BQM by $9.51$ objective units
on average (maximum $21.69$).  The
classical-reproducibility finding therefore covers the full tested
range $N \leq 640$, not only the Gurobi-provable regime.

\paragraph{Interpretation.}
At matched $5$\,s wall-clock, a strong classical heuristic on the
penalty-encoded BQM produces objectives numerically close
from those produced by the hybrid CQM service on the same instances
(median absolute Tabu $-$ CQM gap $=0$ across all $108$ seeded runs,
mean absolute gap $0.00078$, and maximum $0.00798$).  Because the penalty-encoded BQM and the
constraint-native CQM encode the same cardinality-constrained MV
problem studied here, recovering Gurobi-optimal (or CQM-matching) objectives on the
BQM matches the solution quality of the CQM path; the encoding differs,
but the feasible set and the optimum do not.  (Equal objective value
does not by itself imply the identical selected portfolio unless the
optimizer is unique, which we do not claim.)  This is
\emph{not} a LeapHybridCQM internal ablation: the hybrid CQM pipeline
consumes a different problem encoding and we do not have access to its
proprietary internals.  What the result \emph{does} show is that the
objective levels reached by hybrid CQM on this problem class are
classically attainable at the same wall-clock budget on
the penalty-encoded variant, at every tested scale.  Combined with the
$\approx 0.7\%$ measured QPU wall-clock fraction
(Section~\ref{sec:timing-breakdown}), the experiment provides no positive
evidence for a distinct quantum-sampling contribution to solution quality;
it neither estimates nor rules out the QPU's contribution inside the
unopened CQM pipeline.

Tabu's substantial improvement over LeapHybridBQM (mean $\sim 0.77$ in
absolute objective at $N \leq 200$, growing to $\sim 9.5$ at
$N \in \{400, 640\}$) is consistent with the LeapHybridBQM pipeline
being limited by the interaction between its penalty-encoded input
and its specific decomposition policy.  The observed objective levels do
not require a uniquely quantum capability, but the external comparison does
not reveal how internal QPU samples are used.  By contrast the
constraint-native LeapHybridCQM pipeline, which avoids the penalty
term $A \cdot \mathbf{1} \mathbf{1}^\top$ that fully connects the
encoded logical graph (Theorem~\ref{thm:density_collapse}), reaches the
same objective levels as a strong classical heuristic alone.

\FloatBarrier
\subsection{Descriptive FF-49 Financial Overlay}
\label{sec:ff49-financial}

The synthetic-instance results
(Sections~\ref{sec:a1}--\ref{sec:b2}) measure solver behaviour on
controlled QUBOs; they do not directly answer the question a
quantitative portfolio manager will ask: whether the resulting
portfolios are competitive against simple financial baselines.  We
complement the solver-side analysis with an out-of-sample financial
evaluation on the Fama--French $49$-industry daily-return data set used
in the equity-data sanity check of Section~\ref{sec:a4}.

\paragraph{Reconstruction.}
For each of the $90$ post-projection portfolios saved from the direct-QPU
equity-data campaign, which comprises $5$ rolling rebalance windows
spanning May~1927 to December~2025, each with $N \in \{10, 20, 30\}$, two
topologies (Pegasus, Zephyr), and three chain strengths, we
reconstruct the daily out-of-sample portfolio return as the
equal-weight mean over the selected industries' realised daily returns
for the calendar month following the rebalance date.  The five
rebalance dates are chosen as evenly spaced indices over the $1{,}184$
available monthly rebalance dates in $1926$--$2025$.  No daily-return
data is required from the saved JSONL; only the saved selected-industries
list and the public Kenneth French daily data~\citep{french_data}.

\paragraph{Metrics.}
We report annualised Sharpe ratio (scaled by $\sqrt{252}$) and maximum
drawdown.  We do not attach inferential significance to either metric:
each evaluation month contains only 21--26 daily observations, the six
hardware configurations within a $(\text{window},N)$ cell are dependent,
and five sampled months are insufficient for a defensible probabilistic or
deflated-Sharpe analysis.

For comparison we compute the same metrics on the equal-weight all-$49$
industry portfolio (the $1/N$ baseline of
\citealp{demiguel2009optimal}) over the same five evaluation months.
The $1/N$ baseline is widely understood to be a hard-to-beat benchmark
in industry-level rolling allocation.

\paragraph{Aggregate result.}
Across all $90$ post-projection direct-QPU-path portfolios, the mean annualised Sharpe ratio is
$+1.94$; on the same five evaluation months, the $1/N$ all-$49$
baseline delivers a mean Sharpe of $+2.22$.  The direct-QPU-path
portfolios \emph{underperform} the $1/N$ baseline by a Sharpe delta of
$-0.28$ on average.

This finding is consistent with the broad
\citet{demiguel2009optimal} result that $1/N$ is hard to beat with
plug-in mean-variance allocations.  The five-window breakdown
(Table~\ref{tab:ff49-financial}) is descriptive only, because $5$ rolling
windows is too thin a sample for inferential claims about regime
dependence, but it shows that the average underperformance is not
uniform across the sampled period: the largest positive spread relative to
$1/N$ occurs in window~$0$ ($1927\text{-}06\text{-}01$ through
$1927\text{-}06\text{-}30$, where the post-projection $N{=}10$
direct-QPU-path portfolio
achieves Sharpe $+1.29$ while the $1/N$ baseline delivers $-1.21$),
and additional positive cell-level spreads occur in windows~$1$ and~$4$.
Those cells do not overturn the negative aggregate mean across all $90$
post-projection portfolios.

\begin{table}[ht]
\centering
\caption{Out-of-sample annualised Sharpe ratio on the FF-49 evaluation
months.  ``Mean QPU path'' is averaged over $6$ post-projection configurations per
$(\text{window}, N)$ cell ($2$ topologies $\times$ $3$ chain strengths).
``$1/N$ baseline'' is equal-weight over all 49 industries.}
\label{tab:ff49-financial}
\begin{tabular}{lrrrr}
\toprule
Window (rebalance date) & Path $N=10$ & Path $N=20$ & Path $N=30$ & $1/N$ \\
\midrule
0 ($1927\text{-}05\text{-}31$) & $+1.29$ & $+0.06$ & $-0.98$ & $-1.21$ \\
1 ($1951\text{-}12\text{-}31$) & $+2.05$ & $+2.64$ & $+2.19$ & $+2.20$ \\
2 ($1976\text{-}08\text{-}31$) & $+0.72$ & $+2.32$ & $+2.07$ & $+2.92$ \\
3 ($2001\text{-}04\text{-}30$) & $-1.98$ & $+0.74$ & $+1.36$ & $+1.85$ \\
4 ($2025\text{-}12\text{-}31$) & $+4.05$ & $+5.33$ & $+7.29$ & $+5.35$ \\
\addlinespace
Mean over windows                & $+1.23$ & $+2.22$ & $+2.39$ & $+2.22$ \\
\bottomrule
\end{tabular}
\end{table}

\paragraph{Deployment implication.}
The cardinality-constrained mean-variance problem
studied in this paper is solver-tractable on almost all anchored instances
(Gurobi proves optimum on $62$ of $63$ attempted cases at $N \leq 200$;
CQM matches Gurobi on all $54$ head-to-head cases at $N \leq 120$).
Separately, the evaluated post-projection direct-QPU-path portfolios are not financially
competitive against $1/N$ at the aggregate level on the five FF-49
evaluation months.
The practical recommendation to a portfolio manager evaluating D-Wave
hybrid solvers is therefore double-edged: the CQM path matches all $54$
available Gurobi-proved head-to-head objectives and its objective levels are
also reached by a classical Tabu heuristic at matched compute
(Sections~\ref{sec:timing-breakdown}--\ref{sec:qpu-replacement-ablation});
the post-projection direct-QPU-path portfolios reconstructed in this overlay underperform
$1/N$ in Sharpe terms on average across the five sampled windows.
Because CQM decision vectors were not retained in the equity campaign,
the overlay does not establish out-of-sample financial performance for
the CQM path.  For a rebalancing system
with a $5$\,s decision cadence, the relevant operational question is
not the unidentifiable internal allocation alone, but whether the resulting
portfolios are worth the entire pipeline's operational and cloud cost;
on the present data, the answer on the aggregate of the five evaluated
windows is no.

\FloatBarrier
% ============================================================================
\section{Discussion}
\label{sec:discussion}
% ============================================================================

\subsection{Formulation--Solver Interaction}

On the tested instance families, the constraint-native CQM service produces lower objective
values than the hybrid-BQM service at every $N > 10$, across all three density
families, and at every tested budget. One relevant mechanism is that the penalty term $A \cdot
\mathbf{1}\mathbf{1}^\top$ adds $O(N^2)$ couplers with magnitude $A$ to the QUBO, while the
economic signal in $\Sigma$ is distributed across entries whose individual magnitudes diminish
as $N$ grows. This is consistent with the constraint-dilution diagnosis
of~\citet{lozano2026penalty}.  It is not a formulation-only explanation:
Tabu reaches CQM-level objectives on the same BQM, so the observed degradation
depends on the interaction between encoding and solver pipeline.

\subsection{Density Collapse as Structural Explanation}

The density-axis collapse (Section~\ref{sec:density_collapse}) provides a structural explanation
for the formulation gap. For the tested direct-QPU and hybrid-BQM encodings, the penalty term produces a complete
logical graph regardless of the original $\Sigma$ density. Chain-break fractions and embedding
overheads at each $N$ are virtually identical across diagonal, block, and dense families. For
CQM, the original density is preserved, potentially allowing the hybrid solver's classical
component to exploit sparsity. This asymmetry means that density is a meaningful benchmark axis
only for the CQM path.

\subsection{CQM Budget Saturation}

On the tested instances, the CQM solver returns identical recorded
objective values at 5 and 300 seconds.  This shows that longer requested
budgets do not improve the retained objective on these fixed instances;
it does not reveal the internal trajectory, prove convergence at five
seconds, or imply the same behaviour for richer constraints or
non-binary variables.

\subsection{Role of Direct QPU}

Direct QPU access with penalty-encoded QUBOs incurs structural embedding
overhead from the complete logical graph. At $N = 80$, the tested
embeddings use $750$ physical qubits on average ($9.4\times$ overhead),
while the tested, untuned chain-strength grid produces chain-break fractions
above $0.93$. For practitioners choosing within the D-Wave ecosystem, hybrid
CQM offers better solution quality without user-managed embedding or
exact-$K$ projection. Direct QPU remains relevant for studying annealing
dynamics and for problem classes with naturally sparse logical graphs.

\subsection{Implications for Quantum Finance Benchmarks and Deployment}

These results motivate more qualified reporting in quantum-finance benchmarks. The CQM path
matches Gurobi on the tested instances, but Gurobi also proves optimality, so the result is a
practical D-Wave deployment finding rather than a classical-superiority finding. The field-level
implication is that quantum-finance evaluations can be dominated by modeling-interface choices:
the hybrid BQM path degrades under penalty encoding, whereas the CQM path returns
Gurobi-optimal objectives on the anchored grid. Future quantum-finance claims should therefore separate
three questions: modeling fidelity (does the formulation faithfully represent the financial
problem?), hybrid-solver usefulness (does the D-Wave workflow produce good solutions within
operational time budgets?), and classical competitiveness (does the quantum-assisted path
outperform strong classical alternatives?).

For financial institutions evaluating D-Wave, the immediate recommendation is to prototype
constrained portfolio and allocation problems in CQM first, measure against a strong MIQP
baseline, and avoid treating QUBO penalty tuning as the default production route. This is
especially relevant to prediction-market optimization systems and market venues where portfolio
selection over bets or contracts combines cardinality, exposure limits, budget constraints, and
turnover costs.

A second deployment-relevant consequence comes from the budget-saturation result of
Section~\ref{sec:b2}. For periodic-rebalancing systems with a decision cadence on the order of
the hybrid solver's 5\,s minimum or longer (intraday sector-rotation engines, end-of-period
portfolio re-allocation, batch contract-basket re-sizing, or any periodic re-optimization over
many simultaneously open events or contracts), the relevant operational metric is not per-call
quality at unbounded budget, but the quality reachable within the shortest acceptable budget.
On the tested instances the CQM service returns the same recorded objective
value at the 5\,s minimum and at every longer budget up to 300\,s, and the
same objective value across repeated calls. The saved records do not establish
that every corresponding decision vector or internal trajectory is identical. In a system
where the rebalancing rate, not per-call solve time, is the binding constraint, the
orchestrator's decision collapses to ``how often to re-solve?'' rather than ``how long to
wait?'', and budget tuning ceases to be a system-level parameter. We deliberately exclude
latency-critical execution paths (sub-second in-play markets, microsecond execution venues)
from this discussion: a 5\,s minimum solve time is incompatible with those use cases regardless
of solution quality. The penalty-encoded BQM path admits no analogous guarantee on the tested
instances at any tested budget.

\subsection{Scope and Limitations}

This study compares formulations and solver interfaces within the D-Wave platform. It is not a
general quantum-vs-classical benchmark: Gurobi MIQP proves optimality for
$62$ of $63$ attempted instances at $N \leq 200$ within 300 seconds, and we
make no claim that any D-Wave path is competitive with
state-of-the-art classical solvers on this problem class.

Additional limitations:
\begin{itemize}
    \item Every empirical campaign uses the single-period
    $\boldsymbol{\tau}=\mathbf{0}$ special case.  The optional turnover
    extension is derived but non-zero switching costs and path-dependent
    rebalancing are not tested.
    \item The penalty weight $A = 4.0$ is held fixed following the calibration
    of~\citet{lozano2026penalty}; Appendix~\ref{app:penalty} verifies robustness to $A \in
    \{2, 4, 8\}$ but does not perform a full per-instance sweep.
    \item The real-data section uses only $N \leq 30$; it serves as a directional sanity check,
    not as the primary evidence.
    \item Warm-start / reverse-annealing experiments are deferred to future work.
    \item Richer constraint sets (sector exposure caps, ESG exclusions, mutual exclusivity,
    and multi-period turnover budgets) would test whether the CQM advantage persists under
    institutional-grade complexity. Prediction-market and sports-betting allocation, where a
    trader selects a limited number of contracts subject to budget and rebalancing costs,
    provides a natural sparse-graph extension.
    \item Stochastic validation (up to 10 repeated calls) is performed on
    a representative fixed-seed subset; one cell has four calls per
    solver, and the full grid is single-shot per configuration.
    \item The direct-QPU chain-strength grid spans $25\%$--$100\%$ of the
    penalty-only pairwise Ising scale but does not include Ocean's
    torque-compensated default or a larger scale-aware sweep; it also uses
    automatic scaling, one gauge, and one anneal duration.  Its chain-break
    results characterize that protocol, not a tuned lower bound on direct-QPU
    performance.
\end{itemize}

% ============================================================================
\section{Conclusion}
\label{sec:conclusion}
% ============================================================================

We have audited where QPU access appears in the exposed telemetry of D-Wave's
hybrid quantum-classical service for cardinality-constrained portfolio
selection.  The empirical campaign uses the single-period
$\boldsymbol{\tau}=\mathbf{0}$ objective; the optional turnover extension
is not tested. The headline observation is operational: the constraint-native LeapHybridCQM service
matches Gurobi's proven optimum on all $54$ head-to-head instances at $N \leq 120$, but the
mean QPU access time is $0.034$\,s: $0.68\%$ of the nominal $5$\,s
budget and approximately $0.72\%$ of measured service run time. The
remaining $\sim 99\%$ of wall-clock is not
labelled QPU access; the exposed telemetry does not resolve that residual into classical
computation versus orchestration or minimum-budget padding. At the level of
reported service telemetry, the successful CQM path is dominated by non-QPU
service time, consistent with the service's documented
classically-orchestrated design~\citep{dwave_hybrid_solvers_whitepaper}, with a small
measured QPU-access component;
the CPU-only comparison of Section~\ref{sec:qpu-replacement-ablation} shows those objective
levels are classically attainable at matched wall-clock: mean absolute delta
$0.00078$ across $108$ fixed-seed runs on $36$ instances (maximum $0.00798$),
falling to $3.04\times10^{-5}$ at $N \in \{400,640\}$ (maximum $0.000989$).
This is an objective-achievability result, not a bound on
the QPU's causal role inside the unopened pipeline. A true internal sub-solver ablation of the LeapHybridCQM
pipeline, replacing the QPU sub-call with a classical sub-solver inside D-Wave's proprietary
decomposition workflow, remains future work requiring vendor cooperation. Three further
results sharpen this audit:

\begin{enumerate}
    \item At the tested coefficients, the cardinality penalty contributes
    a dense rank-one matrix that makes the penalty-encoded logical graph
    fully connected regardless of covariance density, collapsing the
    density benchmark axis for BQM and direct-QPU paths.  Theorem~\ref{thm:density_collapse}
    states the finite exact-cancellation exception explicitly.
    \item The CQM service returns exactly identical recorded objective values at every tested
    wall-clock budget from $5$ to $300$ seconds and across repeated calls; the formal
    $v_{\mathrm{service}}$ result excludes the one four-call cell.  On these fixed instances,
    requesting a longer
    budget provides no retained-objective improvement.
    \item Constraint-native CQM produces lower objective values than the penalty-encoded BQM
    interface at every tested $N > 10$ across all density families.  Simulated annealing
    also degrades on the BQM, while seeded Tabu runs largely reproduce CQM-level
    objectives; together these results identify a formulation--pipeline interaction rather
    than a solver-independent failure.  A paired one-sided Wilcoxon test rejects
    $H_0: f_{\mathrm{CQM}} \geq f_{\mathrm{BQM}}$ at every $N \geq 20$ after Holm correction
    ($p_{\mathrm{Holm}} < 0.05$).
\end{enumerate}

For practitioners using D-Wave hybrid solvers, these results recommend the constraint-native
CQM interface and, more importantly, change how D-Wave hybrid performance should be
reported. A reported "hybrid win" carries no automatic implication that the QPU is doing
material work; the QPU's operational share must be measured and disclosed alongside the
wall-clock and quality numbers. We propose that future quantum-finance benchmarks report
\texttt{run\_time}, \texttt{charge\_time}, and \texttt{qpu\_access\_time} separately, together
with the QPU-fraction of wall-clock, so that resource accounting is transparent while causal
attribution, which telemetry alone cannot settle, remains an explicitly separate
question.

\subsection*{Data and Code Availability}

Synthetic instances are generated from parametric covariance models with fixed random seeds for
reproducibility. Equity returns are from the Kenneth French Data Library, 49 industry
portfolios, daily~\citep{french_data}. Source code for all experiments (QUBO/CQM construction,
solver wrappers, analysis scripts, audit-protocol reference implementation), random seeds,
solver configurations, and processed result JSONLs are publicly available under the MIT License
at \url{https://github.com/LuisLozanoM/where-quantum-lives}. The Wilcoxon,
dwell-vs-quality, timing-distribution, seeded-Tabu, and repeated-call
cell-level verification scripts referenced in this paper live under
\texttt{analysis/} in that repository.

\section*{Acknowledgments}

I thank the D-Wave team for providing access to quantum computing resources through the Leap
cloud platform and for their continuous technical support. Live QPU experiments used
\texttt{Advantage\_system4.1} (Pegasus) and \texttt{Advantage2\_system1.13} (Zephyr).

\subsection*{Use of Generative Artificial Intelligence}

In accordance with IOP Publishing's policy, I disclose that this manuscript
and its reproducibility code were prepared with assistance from the following
generative-AI tools during April--July 2026:
\begin{itemize}\itemsep=2pt
  \item \textbf{Anthropic Claude Opus 4.7} via Claude Code, used for
  analysis-code scaffolding, LaTeX drafting, copyediting, and adversarial review.
  \item \textbf{OpenAI GPT-5} via Codex command-line and desktop interfaces,
  used for independent code review, CPU-only verification and statistical
  analysis, manuscript critique, and copyediting.
  \item \textbf{Moonshot Kimi~K2}, used for independent manuscript critique.
  \item \textbf{OpenAI ChatGPT (GPT-5 Pro)}, used for exploratory
  brainstorming and alternative-framing discussions.
\end{itemize}

These tools did not fabricate, replace, or alter the original QPU or hybrid
service records.  They assisted with code generation, execution of reproducibility
checks, interpretation checks, and prose revision.  I inspected the code and
outputs, verified the numerical claims against the saved records, approved every
sentence, and made all scientific and submission decisions.  I take full
responsibility for the manuscript.

\section*{Statements and Declarations}

\subsection*{Competing Interests}

I declare no competing interests.

\subsection*{Funding}

This research received no external funding. QPU access was obtained through my personal
D-Wave Leap cloud subscription.

\subsection*{Authors' Contributions}

I am the sole author and conceived the study, designed the experiments, implemented the
code, ran the experiments, analyzed the results, and wrote the manuscript.

\appendix

\section{Penalty Robustness Check}
\label{app:penalty}

The penalty weight $A = 4.0$ used in the main experiments is consistent with the calibration
study of~\citet{lozano2026penalty}. We verify that the BQM results are not artifacts of a
single penalty value by running the hybrid BQM solver at $A \in \{2, 4, 8\}$ on a subset of 12
instances ($N \in \{20, 50, 80, 120\}$, all three density families, seed~0).

Table~\ref{tab:penalty} reports the projected objective values. All 36 runs returned
raw-feasible solutions (the hybrid solver satisfied the penalty-encoded cardinality constraint
before post-processing in every case). The objective values are qualitatively similar across
penalty values at each $(N, \rho)$, with no systematic trend favoring any particular $A$.
The hybrid-BQM degradation at large $N$ is present at all three tested penalty
levels, supporting robustness over this limited grid.  It does not exclude a
better per-instance penalty or solver configuration.

\begin{table}[!ht]
\centering
\caption{Hybrid BQM objective by penalty weight $A$ (projected to exact-$K$). All 36 runs
returned raw-feasible solutions. Objective values are qualitatively stable across $A$.}
\label{tab:penalty}
\small
\begin{tabular}{rlrrr}
\toprule
$N$ & Family & $A = 2$ & $A = 4$ & $A = 8$ \\
\midrule
20 & diagonal & $-0.340$ & $-0.340$ & $-0.340$ \\
50 & diagonal & $-0.587$ & $-0.603$ & $-0.590$ \\
80 & diagonal & $-0.929$ & $-0.931$ & $-0.910$ \\
120 & diagonal & $-1.239$ & $-1.155$ & $-1.181$ \\
\addlinespace
20 & block & $-0.252$ & $-0.252$ & $-0.264$ \\
50 & block & $-0.348$ & $-0.361$ & $-0.395$ \\
80 & block & $-0.059$ & $-0.069$ & $-0.040$ \\
120 & block & $+0.909$ & $+1.003$ & $+0.982$ \\
\addlinespace
20 & dense & $-0.365$ & $-0.365$ & $-0.350$ \\
50 & dense & $-0.672$ & $-0.677$ & $-0.693$ \\
80 & dense & $-1.119$ & $-1.093$ & $-1.060$ \\
120 & dense & $-1.613$ & $-1.462$ & $-1.453$ \\
\bottomrule
\end{tabular}
\end{table}

% ============================================================================
% References
% ============================================================================

% Flush pending appendix floats before the bibliography without forcing an
% otherwise empty float page.
\FloatBarrier

\bibliographystyle{plainnat}
\begingroup
\small
\bibliography{references}

\begin{thebibliography}{24}
\providecommand{\natexlab}[1]{#1}
\providecommand{\url}[1]{\texttt{#1}}
\expandafter\ifx\csname urlstyle\endcsname\relax
  \providecommand{\doi}[1]{doi: #1}\else
  \providecommand{\doi}{doi: \begingroup \urlstyle{rm}\Url}\fi

\bibitem[Acharya et~al.(2025)Acharya, Yalovetzky, Minssen, Chakrabarti,
  Shaydulin, Raymond, Sun, Herman, Andrist, Salton, Schuetz, Katzgraber, and
  Pistoia]{acharya2025}
Atithi Acharya, Romina Yalovetzky, Pierre Minssen, Shouvanik Chakrabarti,
  Ruslan Shaydulin, Rudy Raymond, Yue Sun, Dylan Herman, Ruben~S. Andrist,
  Grant Salton, Martin J.~A. Schuetz, Helmut~G. Katzgraber, and Marco Pistoia.
\newblock Decomposition pipeline for large-scale portfolio optimization with
  applications to near-term quantum computing.
\newblock \emph{Physical Review Research}, 7:\penalty0 023142, 2025.
\newblock \doi{10.1103/PhysRevResearch.7.023142}.

\bibitem[Acuaviva et~al.(2026)Acuaviva, Aguirre, Pe{\~n}a, and
  Sanz]{acuaviva2026}
Arturo Acuaviva, David Aguirre, Rub{\'e}n Pe{\~n}a, and Mikel Sanz.
\newblock Benchmarking quantum computers: Towards a standard performance
  evaluation approach.
\newblock \emph{Quantum Science and Technology}, 11\penalty0 (2):\penalty0
  025004, 2026.
\newblock \doi{10.1088/2058-9565/ae4269}.

\bibitem[Bertsimas and Shioda(2009)]{bertsimas2009cardinality}
Dimitris Bertsimas and Romy Shioda.
\newblock Algorithm for cardinality-constrained quadratic optimization.
\newblock \emph{Computational Optimization and Applications}, 43\penalty0
  (1):\penalty0 1--22, 2009.
\newblock \doi{10.1007/s10589-007-9126-9}.

\bibitem[Boothby et~al.(2020)Boothby, Bunyk, Raymond, and
  Roy]{boothby-bunyk-raymond-roy-2020}
Kelly Boothby, Paul Bunyk, Jack Raymond, and Aidan Roy.
\newblock Next-generation topology of {D-Wave} quantum processors, 2020.
\newblock Pegasus and Zephyr topology design and embedding constants.

\bibitem[Bucher et~al.(2024)Bucher, Kraus, Blenninger, Lachner, Stein, and
  Linnhoff-Popien]{bucher2024}
David Bucher, Nico Kraus, Jonas Blenninger, Michael Lachner, Jonas Stein, and
  Claudia Linnhoff-Popien.
\newblock Towards robust benchmarking of quantum optimization algorithms.
\newblock In \emph{2024 IEEE International Conference on Quantum Computing and
  Engineering (QCE)}, pages 159--170, 2024.
\newblock \doi{10.1109/QCE60285.2024.11030870}.

\bibitem[Buonaiuto et~al.(2023)Buonaiuto, Gargiulo, De~Pietro, Esposito, and
  Pota]{buonaiuto2023}
Giuseppe Buonaiuto, Francesco Gargiulo, Giuseppe De~Pietro, Massimo Esposito,
  and Marco Pota.
\newblock Best practices for portfolio optimization by quantum computing,
  experimented on real quantum devices.
\newblock \emph{Scientific Reports}, 13:\penalty0 19434, 2023.
\newblock \doi{10.1038/s41598-023-45392-w}.

\bibitem[{D-Wave Systems}(2026{\natexlab{a}})]{dwave_hybrid}
{D-Wave Systems}.
\newblock Sampling with hybrid solvers, 2026{\natexlab{a}}.
\newblock URL
  \url{https://docs.dwavequantum.com/en/latest/industrial_optimization/index_hybrid.html}.
\newblock Accessed July 11, 2026.

\bibitem[{D-Wave Systems}(2026{\natexlab{b}})]{dwave_hybrid_solvers_whitepaper}
{D-Wave Systems}.
\newblock Sampling with hybrid solvers: Solver timing, 2026{\natexlab{b}}.
\newblock URL
  \url{https://docs.dwavequantum.com/en/latest/industrial_optimization/index_hybrid.html#solver-timing}.
\newblock Defines \texttt{run\_time}, \texttt{charge\_time}, and
  \texttt{qpu\_access\_time} for the LeapHybridCQM and LeapHybridBQM samplers;
  accessed 2026-07-11.

\bibitem[DeMiguel et~al.(2009)DeMiguel, Garlappi, and
  Uppal]{demiguel2009optimal}
Victor DeMiguel, Lorenzo Garlappi, and Raman Uppal.
\newblock Optimal versus naive diversification: How inefficient is the {$1/N$}
  portfolio strategy?
\newblock \emph{Review of Financial Studies}, 22\penalty0 (5):\penalty0
  1915--1953, 2009.
\newblock \doi{10.1093/rfs/hhm075}.

\bibitem[French(2026)]{french_data}
Kenneth~R. French.
\newblock 49 industry portfolios, 2026.
\newblock URL
  \url{https://mba.tuck.dartmouth.edu/pages/faculty/ken.french/Data_Library/det_49_ind_port.html}.
\newblock Accessed March 31, 2026.

\bibitem[Herman et~al.(2023)Herman, Googin, Liu, Sun, Galda, Safro, Pistoia,
  and Alexeev]{herman2023}
Dylan Herman, Cody Googin, Xiaoyuan Liu, Yue Sun, Alexey Galda, Ilya Safro,
  Marco Pistoia, and Yuri Alexeev.
\newblock Quantum computing for finance.
\newblock \emph{Nature Reviews Physics}, 5\penalty0 (8):\penalty0 450--465,
  2023.
\newblock \doi{10.1038/s42254-023-00603-1}.

\bibitem[Koch et~al.(2026)Koch, Bernal~Neira, Chen, et~al.]{koch2026qoblib}
Thorsten Koch, David~E. Bernal~Neira, Ying Chen, et~al.
\newblock The quantum optimization benchmarking library.
\newblock \emph{Nature Computational Science}, 6\penalty0 (6):\penalty0
  653--671, 2026.
\newblock \doi{10.1038/s43588-026-00991-1}.

\bibitem[Lang et~al.(2022)Lang, Zielinski, and Feld]{lang2022}
Jonas Lang, Sebastian Zielinski, and Sebastian Feld.
\newblock Strategic portfolio optimization using simulated, digital, and
  quantum annealing.
\newblock \emph{Applied Sciences}, 12\penalty0 (23):\penalty0 12288, 2022.
\newblock \doi{10.3390/app122312288}.

\bibitem[Lozano(2026)]{lozano2026penalty}
Luis Lozano.
\newblock A penalty-free pipeline for direct quantum-annealer portfolio
  optimization, 2026.
\newblock URL \url{https://arxiv.org/abs/2605.17628}.

\bibitem[Lubinski et~al.(2023)Lubinski, Johri, Varosy, Coleman, Zhao, Necaise,
  Baldwin, Mayer, and Proctor]{lubinski2023}
Thomas Lubinski, Sonika Johri, Paul Varosy, Jeremiah Coleman, Luning Zhao,
  Jason Necaise, Charles~H. Baldwin, Karl Mayer, and Timothy Proctor.
\newblock Application-oriented performance benchmarks for quantum computing.
\newblock \emph{IEEE Transactions on Quantum Engineering}, 4:\penalty0 1--32,
  2023.
\newblock \doi{10.1109/TQE.2023.3253761}.

\bibitem[Lucas(2014)]{lucas2014}
Andrew Lucas.
\newblock Ising formulations of many {NP} problems.
\newblock \emph{Frontiers in Physics}, 2:\penalty0 5, 2014.
\newblock \doi{10.3389/fphy.2014.00005}.

\bibitem[Markowitz(1952)]{markowitz1952}
Harry Markowitz.
\newblock Portfolio selection.
\newblock \emph{The Journal of Finance}, 7\penalty0 (1):\penalty0 77--91, 1952.
\newblock \doi{10.2307/2975974}.

\bibitem[Morapakula et~al.(2025)Morapakula, Deshpande, Yata, Ubale, Wad, and
  Ikeda]{morapakula2025endtoend}
Sai~Nandan Morapakula, Sangram Deshpande, Rakesh Yata, Rushikesh Ubale, Uday
  Wad, and Kazuki Ikeda.
\newblock End-to-end portfolio optimization with hybrid quantum annealing.
\newblock \emph{Advanced Quantum Technologies}, 2025.
\newblock \doi{10.1002/qute.202500753}.

\bibitem[Mugel et~al.(2022)Mugel, Kuchkovsky, S\'anchez, Fern\'andez-Lorenzo,
  Luis-Hita, Lizaso, and Or\'us]{mugel2022}
Samuel Mugel, Carlos Kuchkovsky, Escol\'astico S\'anchez, Samuel
  Fern\'andez-Lorenzo, Jorge Luis-Hita, Enrique Lizaso, and Rom\'an Or\'us.
\newblock Dynamic portfolio optimization with real datasets using quantum
  processors and quantum-inspired tensor networks.
\newblock \emph{Physical Review Research}, 4:\penalty0 013006, 2022.
\newblock \doi{10.1103/PhysRevResearch.4.013006}.

\bibitem[Or\'us et~al.(2019)Or\'us, Mugel, and Lizaso]{orus2019}
Rom\'an Or\'us, Samuel Mugel, and Enrique Lizaso.
\newblock Quantum computing for finance: Overview and prospects.
\newblock \emph{Reviews in Physics}, 4:\penalty0 100028, 2019.
\newblock \doi{10.1016/j.revip.2019.100028}.

\bibitem[Pelofske(2025)]{pelofske2025}
Elijah Pelofske.
\newblock Comparing three generations of {D-Wave} quantum annealers for minor
  embedded combinatorial optimization problems.
\newblock \emph{Quantum Science and Technology}, 10\penalty0 (2):\penalty0
  025025, 2025.
\newblock \doi{10.1088/2058-9565/adb029}.

\bibitem[Sakuler et~al.(2025)Sakuler, Oberreuter, Aiolfi, Asproni, Roman, and
  Schiefer]{sakuler2025}
Wolfgang Sakuler, Johannes~M. Oberreuter, Riccardo Aiolfi, Luca Asproni,
  Branislav Roman, and J{\"u}rgen Schiefer.
\newblock A real-world test of portfolio optimization with quantum annealing.
\newblock \emph{Quantum Machine Intelligence}, 7:\penalty0 43, 2025.
\newblock \doi{10.1007/s42484-025-00268-2}.

\bibitem[Venturelli and Kondratyev(2019)]{venturelli2019}
Davide Venturelli and Alexei Kondratyev.
\newblock Reverse quantum annealing approach to portfolio optimization
  problems.
\newblock \emph{Quantum Machine Intelligence}, 1:\penalty0 17--30, 2019.
\newblock \doi{10.1007/s42484-019-00001-w}.

\bibitem[Verma and Lewis(2022)]{verma2020}
Amit Verma and Mark Lewis.
\newblock Penalty and partitioning techniques to improve performance of {QUBO}
  solvers.
\newblock \emph{Discrete Optimization}, 44:\penalty0 100594, 2022.
\newblock \doi{10.1016/j.disopt.2020.100594}.

\end{thebibliography}
\endgroup

\end{document}